\newcolumntype{L}[1]{>{\raggedright\let\newline\\\arraybackslash\hspace{0pt}}m{#1}}
\newcolumntype{C}[1]{>{\centering\let\newline\\\arraybackslash\hspace{0pt}}m{#1}}
\newcolumntype{R}[1]{>{\raggedleft\let\newline\\\arraybackslash\hspace{0pt}}m{#1}}
\let\MYcaption\@makecaption
\let\@makecaption\MYcaption
\let\oldgls\gls
\let\oldglspl\glspl
\newcommand\fussy@ifnextchar[3]{%
	\let\reserved@d=#1%
	\def\reserved@a{#2}%
	\def\reserved@b{#3}%
	\futurelet\@let@token\fussy@ifnch}
\def\fussy@ifnch{%
	\ifx\@let@token\reserved@d
		\let\reserved@c\reserved@a
	\else
		\let\reserved@c\reserved@b
	\fi
	\reserved@c}
\renewcommand{\gls}[1]{%
\oldgls{#1}\fussy@ifnextchar.{\@checkperiod}{\@}}
\renewcommand{\glspl}[1]{%
\oldglspl{#1}\fussy@ifnextchar.{\@checkperiod}{\@}}
\newcommand{\@checkperiod}[1]{%
	\ifnum\sfcode`\.=\spacefactor\else#1\fi
}
\newacronym{wrt}{w.r.t.}{with respect to}
\newacronym{RHS}{R.H.S.}{right-hand side}
\newacronym{LHS}{L.H.S.}{left-hand side}
\newacronym{iid}{i.i.d.}{independent and identically distributed}
\newacronym{SVD}{SVD}{singular value decomposition}
\newacronym{5G}{5G}{fifth generation wireless}
\newacronym{3GPP}{3GPP}{third generation partnership project}
\newacronym{OFDM}{OFDM}{orthogonal frequency-division multiplexing}
\let\saved@bibitem\@bibitem\makeatother
\let\@bibitem\saved@bibitem\makeatother
\crefname{equation}{}{}
\Crefname{equation}{}{}
\crefname{claim}{claim}{claims}
\crefname{step}{step}{steps}
\crefname{line}{line}{lines}
\crefname{condition}{condition}{conditions}
\crefname{dmath}{}{}
\crefname{dseries}{}{}
\crefname{dgroup}{}{}
\crefname{Problem}{Problem}{Problems}
\crefname{Theorem}{Theorem}{Theorems}
\crefname{Corollary}{Corollary}{Corollaries}
\crefname{Proposition}{Proposition}{Propositions}
\crefname{Lemma}{Lemma}{Lemmas}
\crefname{Definition}{Definition}{Definitions}
\crefname{Example}{Example}{Examples}
\crefname{Assumption}{Assumption}{Assumptions}
\crefname{Remark}{Remark}{Remarks}
\crefname{Rem}{Remark}{Remarks}
\crefname{remarks}{Remarks}{Remarks}
\crefname{Appendix}{Appendix}{Appendices}
\crefname{Supplement}{Supplement}{Supplements}
\crefname{Exercise}{Exercise}{Exercises}
\crefname{Theorem_A}{Theorem}{Theorems}
\crefname{Corollary_A}{Corollary}{Corollaries}
\crefname{Proposition_A}{Proposition}{Propositions}
\crefname{Lemma_A}{Lemma}{Lemmas}
\crefname{Definition_A}{Definition}{Definitions}
		\let\Cref\crtCref
		\let\cref\crtcref
\def\cleartheorem#1{%
    \expandafter\let\csname#1\endcsname\relax
    \expandafter\let\csname c@#1\endcsname\relax
}
\def\clearthms#1{ \@for\tname:=#1\do{\cleartheorem\tname} }
		\newtheorem{Theorem}{Theorem}
		\newtheorem{Corollary}{Corollary}
		\newtheorem{Proposition}{Proposition}
		\newtheorem{Lemma}{Lemma}
		\newtheorem{Theorem}{Theorem}
		\newtheorem{Corollary}[Theorem]{Corollary}
		\newtheorem{Proposition}[Theorem]{Proposition}
	\newtheorem{Remark}{Remark}
\theoremstyle{remark}
\theoremstyle{plain}
\newcommand{\qednew}{\nobreak \ifvmode \relax \else
		\ifdim\lastskip<1.5em \hskip-\lastskip
			\hskip1.5em plus0em minus0.5em \fi \nobreak
		\vrule height0.75em width0.5em depth0.25em\fi}
\NewDocumentCommand{\movedownsub}{e{^_}}{%
	\IfNoValueTF{#1}{%
		\IfNoValueF{#2}{^{}}
	}{%
		^{#1}
	}%
	\IfNoValueF{#2}{_{#2}}
}
\let\latexchi\chi
\RenewDocumentCommand{\chi}{}{\latexchi\movedownsub}
\newcommand{\ba}{\mathbf{a}}
\newcommand{\bA}{\mathbf{A}}
\newcommand{\bC}{\mathbf{C}}
\newcommand{\bd}{\mathbf{d}}
\newcommand{\bI}{\mathbf{I}}
\newcommand{\bN}{\mathbf{N}}
\newcommand{\bP}{\mathbf{P}}
\newcommand{\bR}{\mathbf{R}}
\newcommand{\bx}{\mathbf{x}}
\newcommand{\bX}{\mathbf{X}}
\newcommand{\bY}{\mathbf{Y}}
\DeclareSymbolFont{bsfletters}{OT1}{cmss}{bx}{n}
\DeclareSymbolFont{ssfletters}{OT1}{cmss}{m}{n}
\DeclareMathSymbol{\bsfGamma}{0}{bsfletters}{'000}
\DeclareMathSymbol{\ssfGamma}{0}{ssfletters}{'000}
\DeclareMathSymbol{\bsfDelta}{0}{bsfletters}{'001}
\DeclareMathSymbol{\ssfDelta}{0}{ssfletters}{'001}
\DeclareMathSymbol{\bsfTheta}{0}{bsfletters}{'002}
\DeclareMathSymbol{\ssfTheta}{0}{ssfletters}{'002}
\DeclareMathSymbol{\bsfLambda}{0}{bsfletters}{'003}
\DeclareMathSymbol{\ssfLambda}{0}{ssfletters}{'003}
\DeclareMathSymbol{\bsfXi}{0}{bsfletters}{'004}
\DeclareMathSymbol{\ssfXi}{0}{ssfletters}{'004}
\DeclareMathSymbol{\bsfPi}{0}{bsfletters}{'005}
\DeclareMathSymbol{\ssfPi}{0}{ssfletters}{'005}
\DeclareMathSymbol{\bsfSigma}{0}{bsfletters}{'006}
\DeclareMathSymbol{\ssfSigma}{0}{ssfletters}{'006}
\DeclareMathSymbol{\bsfUpsilon}{0}{bsfletters}{'007}
\DeclareMathSymbol{\ssfUpsilon}{0}{ssfletters}{'007}
\DeclareMathSymbol{\bsfPhi}{0}{bsfletters}{'010}
\DeclareMathSymbol{\ssfPhi}{0}{ssfletters}{'010}
\DeclareMathSymbol{\bsfPsi}{0}{bsfletters}{'011}
\DeclareMathSymbol{\ssfPsi}{0}{ssfletters}{'011}
\DeclareMathSymbol{\bsfOmega}{0}{bsfletters}{'012}
\DeclareMathSymbol{\ssfOmega}{0}{ssfletters}{'012}
\newcommand{\bPhi}{\bm{\Phi}}
\newcommand*\rel@kern[1]{\kern#1\dimexpr\macc@kerna}
\newcommand*\widebar[1]{%
  \begingroup
  \def\mathaccent##1##2{%
    \rel@kern{0.8}%
    \overline{\rel@kern{-0.8}\macc@nucleus\rel@kern{0.2}}%
    \rel@kern{-0.2}%
  }%
  \macc@depth\@ne
  \let\math@bgroup\@empty \let\math@egroup\macc@set@skewchar
  \mathsurround\z@ \frozen@everymath{\mathgroup\macc@group\relax}%
  \macc@set@skewchar\relax
  \let\mathaccentV\macc@nested@a
  \macc@nested@a\relax111{#1}%
  \endgroup
}
\DeclareMathOperator*{\argmax}{arg\,max}
\DeclareMathOperator{\rank}{rank}
\newcommand{\ifbcdot}[1]{\ifblank{#1}{\cdot}{#1}}
\DeclarePairedDelimiterX\abs[1]{\lvert}{\rvert}{\ifbcdot{#1}}
\DeclarePairedDelimiterX\parens[1]{(}{)}{\ifbcdot{#1}}
\DeclarePairedDelimiterX\brk[1]{[}{]}{\ifbcdot{#1}}
\DeclarePairedDelimiterX\braces[1]{\{}{\}}{\ifbcdot{#1}}
\DeclarePairedDelimiterX\angles[1]{\langle}{\rangle}{\ifblank{#1}{\cdot,\cdot}{#1}}
\DeclarePairedDelimiterX\ip[2]{\langle}{\rangle}{\ifbcdot{#1},\ifbcdot{#2}}
\DeclarePairedDelimiterX\norm[1]{\lVert}{\rVert}{\ifbcdot{#1}}
\DeclarePairedDelimiterX\ceil[1]{\lceil}{\rceil}{\ifbcdot{#1}}
\DeclarePairedDelimiterX\floor[1]{\lfloor}{\rfloor}{\ifbcdot{#1}}
\DeclarePairedDelimiterXPP\trace[1]{\operatorname{Tr}}{(}{)}{}{\ifbcdot{#1}} 
\DeclarePairedDelimiterXPP\col[1]{\operatorname{col}}{\{}{\}}{}{\ifbcdot{#1}} 
\DeclarePairedDelimiterXPP\row[1]{\operatorname{row}}{\{}{\}}{}{\ifbcdot{#1}} 
\DeclarePairedDelimiterXPP\erf[1]{\operatorname{erf}}{(}{)}{}{\ifbcdot{#1}}
\DeclarePairedDelimiterXPP\erfc[1]{\operatorname{erfc}}{(}{)}{}{\ifbcdot{#1}}
\DeclarePairedDelimiterXPP\KLD[2]{D}{(}{)}{}{\ifbcdot{#1}\, \delimsize\|\, \ifbcdot{#2}} 
\DeclarePairedDelimiterXPP\op[2]{\operatorname{#1}}{(}{)}{}{#2} 
\DeclarePairedDelimiterXPP\indicate[1]{{\bf 1}}{\{}{\}}{}{\ifbcdot{#1}}
\providecommand\given{}
\DeclarePairedDelimiterX\Set[2]\{\}{%
\renewcommand\given{\SetSymbol[\delimsize]{#1}}
#2
}
\DeclarePairedDelimiterX\Setc[1]\{\}{%
\renewcommand\given{\SetSymbol{:}}
#1
}
\NewDocumentCommand\set{s o m}{%
	\IfBooleanTF#1%
	{\IfValueTF{#2}{\Set*{#2}{#3}}{\Setc*{#3}}}%
	{\IfValueTF{#2}{\Set{#2}{#3}}{\Setc{#3}}}%
}
\NewDocumentCommand{\evalat}{ s O{\big} m e{_^} }{%
\IfBooleanTF{#1}%
{\left. #3 \right|}{#3#2|}%
\IfValueT{#4}{_{#4}}%
\IfValueT{#5}{^{#5}}%
}
\NewDocumentCommand \ifcondp {m m} {%
	#1%
	\IfValueT{#2}{\given #2}%
}
\providecommand\given{}
\DeclarePairedDelimiterXPP\cprob[1]{}(){}{
\renewcommand\given{\nonscript\,\delimsize\vert\allowbreak\nonscript\,\mathopen{}}
\DeclarePairedDelimiterXPP\cexp[1]{}[]{}{
\renewcommand\given{\nonscript\,\delimsize\vert\allowbreak\nonscript\,\mathopen{}}
\ifcondp#1
}
\DeclareDocumentCommand \P { s e{_^} >{\SplitArgument{ 1 }{ @| }}d() g } {%
	\mathbb{P}%
	\IfBooleanTF{#1}%
		{
			\IfValueT{#2}{_{#2}}%
			\IfValueT{#3}{^{#3}}%
			\IfValueTF{#5}%
				{\cprob{#4 \given #5}}%
				{\IfValueT{#4}{\cprob{#4}}}%
		}%
		{
			\IfValueT{#2}{_{#2}}%
			\IfValueT{#3}{^{#3}}%
			\IfValueTF{#5}%
				{\cprob*{#4 \given #5}}%
				{\IfValueT{#4}{\cprob*{#4}}}%
		}%
}
\DeclareDocumentCommand \E { s e{_^} >{\SplitArgument{ 1 }{ @| }}d[] g } {%
	\mathbb{E}%
	\IfBooleanTF{#1}%
		{
			\IfValueT{#2}{_{#2}}%
			\IfValueT{#3}{^{#3}}%
			\IfValueTF{#5}%
				{\cexp{#4 \given #5}}%
				{\IfValueT{#4}{\cexp{#4}}}%
		}%
		{
			\IfValueT{#2}{_{#2}}%
			\IfValueT{#3}{^{#3}}%
			\IfValueTF{#5}%
				{\cexp*{#4 \given #5}}%
				{\IfValueT{#4}{\cexp*{#4}}}%
		}%
}
\NewDocumentCommand \dist {m o o} {%
\mathrm{#1}\left(%
	\IfValueT{#3}{%
		\tl_if_blank:nTF{ #3 }{\cdot\, \middle|\, }{#3\, \middle|\, }%
	}
	\IfValueT{#2}{#2}%
\right)%
}
\NewDocumentCommand {\cbrace} {t+ D[]{black} D(){\widthof{#5}} m m } {%
	\begingroup%
		\color{#2}
		\IfBooleanTF{#1}{%
			\overbrace{#4}^%
		}{
			\underbrace{#4}_%
		}%
		{\parbox[c]{#3}{\centering\footnotesize{#5}}}%
	\endgroup%
}
\let\oldforall\forall
\renewcommand{\forall}{\oldforall \, }
\let\oldexist\exists
\renewcommand{\exists}{\oldexist \, }
\DeclareDocumentCommand{\includeCroppedPdf}{ o O{./Figures/} m }{
	\IfFileExists{#2#3-crop.pdf}{}{%
		\immediate\write18{pdfcrop #2#3.pdf #2#3-crop.pdf}}%
	\includegraphics[#1]{#2#3-crop.pdf}
}
\newcommand*{\addFileDependency}[1]{
  \typeout{(#1)}
  \@addtofilelist{#1}
  \IfFileExists{#1}{}{\typeout{No file #1.}}
}
\definecolor{gray90}{gray}{0.9}
	\newcommand{\msout}[1]{\text{\color{green} \sout{\ensuremath{#1}}}}
	\newcommand{\del}[1]{{\color{green}\ifmmode \msout{#1}\else\sout{#1}\fi}}
	\newcommand{\msout}[1]{#1}
	\newcommand{\del}[1]{#1}
\newcommand{\hhide}[1]{}
	\def\@testdef #1#2#3{%
		\def\reserved@a{#3}\expandafter \ifx \csname #1@#2\endcsname
			\reserved@a  \else
			\typeout{^^Jlabel #2 changed:^^J%
				\meaning\reserved@a^^J%
				\expandafter\meaning\csname #1@#2\endcsname^^J}%
			\@tempswatrue \fi}
  \def\R{{\mathbb{R}}}    \def\E{{\mathbb{E}}}
\newcommand{\beq}{\begin{eqnarray}}
\newcommand{\eeq}{\end{eqnarray}}
  \def\cC{{\mathcal{C}}}
 \def\cN{{\mathcal{N}}}
           \def\lA{\left\|}     \def\rA{\right\|}
\renewenvironment{proof}[1][\proofname]{\par
  \pushQED{\qed}%
  \normalfont \topsep0\p@\relax
  \trivlist
  \item[\hskip3\labelsep\itshape#1\@addpunct{:}]\ignorespaces}{%
  \popQED\endtrivlist\@endpefalse
}
\newacronym{MLE}{MLE}{maximum likelihood estimate}
\begin{document}

\pagenumbering{arabic}

\title{Successful Recovery Performance Guarantees of SOMP Under the $\ell_2$-Norm of Noise}
\author{Wei Zhang,~\IEEEmembership{Member,~IEEE} and Taejoon Kim,~\IEEEmembership{Senior Member,~IEEE}
\thanks{

{W. Zhang is with the School of Electronics and Information Engineering, Harbin Institute of Technology Shenzhen, 518055, China (e-mail: zhangwei.sz@hit.edu.cn).}


{T. Kim is with the Department of Electrical Engineering and Computer Science, The University of Kansas, KS 66045, USA (e-mail: taejoonkim@ku.edu).}
}
}
\maketitle
\begin{abstract}
The simultaneous orthogonal matching pursuit (SOMP) is a popular, greedy approach for common support recovery of a row-sparse matrix.
However, compared to the noiseless scenario, the performance analysis of noisy SOMP is still nascent, especially in the scenario of unbounded noise.
When the measurement matrix and sparse signal are deterministic, we present a new study based on the mutual incoherence property (MIP) for performance analysis of noisy SOMP.
Specifically, when noise is bounded, we provide the condition on which the exact support recovery is guaranteed in terms of the MIP.
When noise is unbounded, we instead derive a bound on the successful recovery probability (SRP) that depends on the specific distribution of the $\ell_2$-norm of the noise matrix.
Then we focus on the typical case when noise is random Gaussian, and show that the lower bound of SRP follows Tracy-Widom law distribution.
The analysis reveals the number of measurements, noise level, the number of sparse vectors, and the value of mutual coherence that are required to guarantee a predefined recovery performance.
Theoretically, we show that the mutual coherence of the measurement matrix must decrease proportionally to the noise standard deviation, and the number of sparse vectors needs to grow proportionally to the noise variance.
Finally, we extensively validate the derived analysis through numerical simulations.

\end{abstract}
\begin{IEEEkeywords}
Compressed sensing, simultaneous orthogonal matching pursuit (SOMP), successful recovery probability, Tracy-Widom law distribution.
\end{IEEEkeywords}

\section{Introduction}
The problem of sparse signal recovery appears in various applications of wireless communications and image processing \cite{Donoho2006CS,Tropp2006Greedy, duanOMP,ZhangSD,Xiao2017Efficient,NoiseSOMP,OMPTT},
in which a common linear observation model is assumed
\begin{align}
\bY = \bPhi \bC + \bN, \label{obOMP}
\end{align}
where $\bY \in \R^{M \times d}$ is the observation, $ \bPhi \in \R^{M\times N}$ with $M \ll N$ is the measurement matrix, $\bC\in \R^{N\times d}$ is the row-sparse matrix with only $L\ll N$ rows being non-zero, and $\bN\in \R^{M \times d}$ is the measurement noise matrix.
Without loss of generality, we assume that each column of $\bPhi$ has unit $\ell_2$-norm.
Unlike the single measurement vector (SMV) scenario, in which $d=1$ in \eqref{obOMP},
the case with $d> 1$ is commonly referred to as multiple measurement vectors (MMV) model \cite{Tropp2006Greedy,ChenMMV}, where the columns of $\bC$ share the same support.


Given the model in \eqref{obOMP}, the goal is to recover the support set of $\bC$ from the observations $\bY$ and the known measurement matrix $\bPhi$.
The potential recovery performance of MMV can be studied from an information-theoretic point-of-view \cite{Jin2013Support}.
There have been multiple variants of support recovery algorithms, such as greedy approaches \cite{TroppOMP, OMPTT,ChenMMV,vanMMV,LeeMMV,gribonval2008atoms,ZhangMMV2022}, subspace methods \cite{Kim2012Music,ZhangSD,Zhang2020Sequential}, convex relaxation \cite{tibshirani1996regression}, and message passing algorithms \cite{donoho2009message,kim2015virtual}. Among these algorithms, the greedy ones can achieve a nearly optimal recovery performance with low complexity, such as the orthogonal matching pursuit (OMP) for SMV models ($d=1$) \cite{TroppOMP, OMPTT,duanOMP}, and simultaneous OMP (SOMP) for MMV models ($d>1$) \cite{ChenMMV,vanMMV,LeeMMV}.
Due to these advantages, the OMP and SOMP models are widely used in localization/positioning \cite{zhu2020millimeter,gao2022method,gligoric2018visible}, sparse channel estimation \cite{OMPchannel,You2022Struc,ZhangMMV2022}, and signal detection for wireless communication systems \cite{Xiao2017Efficient,qin2019transient}.

Both OMP and SOMP are iterative algorithms, in which one atom (one column of $\bPhi$) is selected per iteration and added to the recovered support set.
The performance of support recovery of OMP and SOMP has been analyzed for both  noiseless \cite{Tropp2006Greedy,LosOMP,OMPbound,sharpOMP} and noisy scenarios \cite{OMPTT,MuOMP,Mi2017Prob,ChangRIP,Ge2019RIP,Cai2018Improve,NoiseSOMP,Wu2013Exact,Wang2013Performance,gribonval2008atoms}, in which two important characteristics of measurement matrix $\bPhi$ are
widely adopted: (i) restricted isometry property (RIP) and (ii) mutual incoherence property (MIP).
{
Specifically, a matrix $\bPhi$ satisfies the RIP of order $L$ with the restricted isometry constant (RIC) $\delta_L$ if $\delta_L$ is the
smallest constant such that
\begin{align*}
(1-\delta_L)\| \bx\|_2^2 \le \| \bPhi\bx \|_2^2\le (1+\delta_L)\| \bx\|_2^2
\end{align*}
holds for all $L$-sparse signal $\bx \in \R^{N\times 1}$. Meanwhile, the mutual coherence of a matrix $\bPhi$ is defined as
\begin{align}
\mu = \max_{i\neq j} \left| \langle [\bPhi]_{:,i}, [\bPhi]_{:,j}\rangle \right|, \label{def of mu}
\end{align}
where $ [\bPhi]_{:,i}$ denotes the $i$th column of $\bPhi$, and $\langle \cdot ,\cdot \rangle$ denotes the inner product.
}

There are existing works analyzing the recovery guarantees of OMP \cite{sharpOMP,MuOMP,OMPTT,Mi2017Prob,OMPbound,ChangRIP,Ge2019RIP,Cai2018Improve,amirazTight2021} and SOMP \cite{Tropp2006Greedy,liImproved2018,liFundamental2019,NoiseSOMP,zhangResults2022}.
In the noiseless scenario, OMP can successfully recover the support set in $L$ iterations when the RIC or the mutual coherence of the measurement matrix $\bPhi$ satisfy the conditions $\delta_{L+1} < 1/\sqrt{L+1}$ \cite{sharpOMP} or $\mu < {1}/{(2L-1)}$ \cite{MuOMP}, respectively.
Regarding the recovery performance of noisy OMP, the sufficient conditions for exact support recovery have been studied in terms of the MIP \cite{OMPTT,Mi2017Prob,amirazTight2021} and the RIP \cite{OMPbound,ChangRIP,Ge2019RIP,Cai2018Improve}, which have the additional requirements for the magnitude of the signal compared to the noiseless scenario.
There are also existing works that have analyzed the recovery guarantee for both noiseless and noisy SOMP \cite{Tropp2006Greedy,liImproved2018,liFundamental2019,NoiseSOMP,zhangResults2022}.
In the noiseless case, SOMP guarantees exact support recovery when $\mu < {1}/{(2L-1)}$ \cite{Tropp2006Greedy} or $\delta_{L+1} < 1/\sqrt{L+1}$ \cite{liImproved2018,zhangResults2022}, aligning with the performance guarantees of noiseless OMP.
Regarding noisy SOMP, the recovery error probability under Gaussian noise was analyzed based on the RIP \cite{NoiseSOMP}, which reveals that achieving near-zero error is possible when the signal power and the number of sparse vectors are sufficiently large.
In \cite{liFundamental2019}, an additional condition on the signal magnitude necessary  is introduced in to guarantee successful recovery of SOMP in the noisy case,  in addition to the requirement for RIC. 
Furthermore, there are variants of SOMP proposed in \cite{yaoSparsity2018,liNew2019,kimJoint2020a}, with performance analyses provided based on RIP.
{We note the performance analyses for the noisy SOMP in \cite{NoiseSOMP,liFundamental2019} and the variants in \cite{yaoSparsity2018,liNew2019,kimJoint2020a} are all based on RIP. However, compared to RIP, the advantages of characterizing the mutual coherence in \eqref{def of mu} are in (i) the accountability in terms of being able to capture the property of maximal correlation between different columns of a measurement matrix and (ii) the amenability in terms of being able to evaluate effectively for a fixed measurement matrix.
There are prior works \cite{Wang2013Performance,gribonval2008atoms} that have focused on analyzing the support recovery guarantee of   noisy  SOMP based on the MIP  when the noise is bounded. With the prior knowledge of support, the work in \cite{gribonval2008atoms} provided a recovery guarantee of  SOMP in the form of $\ell_{p,\infty}$-norm of bounded noise. Additionally, when the noise matrix is fixed, \cite{gribonval2008atoms} derived the successful recovery probability (SRP) with respect to the random matrix $\bC$ in \eqref{obOMP}. Meanwhile, the work in \cite{Wang2013Performance}  analyzed the support recovery guarantee of SOMP which is based on the Frobenius norm of the bounded noise matrix. }

It is worth noting that the existing literature on SOMP lacks an analysis of the SRP with respect to the random and unbounded noise. In some real-world signal applications \cite{OMPchannel,You2022Struc,ZhangMMV2022}, 
the measurement matrix and sparse signal are deterministic, while the  noise is typically random and unbounded. Consequently,  the success of support recovery by SOMP becomes a random event which depends on the distribution of noise. Therefore, it is imperative to investigate  the recovery performance of SOMP when faced with random and unbounded noise.
To address the research gap, we aim to provide a tighter bound
  for the SRP of noisy SOMP  based on the MIP analysis, especially in the scenario when noise is unbounded. This is in contrast with the prior works \cite{Wang2013Performance,gribonval2008atoms} that mainly focused on bounded noise. To achieve a tighter SRP bound, compared to the case of bounded noise, it is essential to explore the specific distribution characteristics of the noise matrix that impact the recovery performance of SOMP. Indeed, this is the challenge of performance analysis in the unbounded noise case.
 By analyzing and evaluating the performance of SOMP,
 we aim to answer several fundamental questions:  (i) what is the SRP of SOMP under an arbitrary distribution of noise? (ii) what is the desired condition of the row-sparse matrix $\bC$ to ensure predefined SRP performance? and (iii) what is the number of sparse vectors (i.e., $d$) to guarantee the exact recovery of the support set when the noise is unbounded?
 To address these questions, we first provide a new condition in terms of the MIP on which the exact support recovery is guaranteed for SOMP for bounded noise. When noise is unbounded, we derive a bound on the SRP that depends on the distribution of the noise matrix.

The contributions of this paper are summarized as follows:
\begin{itemize}
  \item
  We present two iterative SOMP algorithms on the basis of their stopping rule. The first scheme is referred to as SOMP-sparsity (SOMPS), which stops the iteration when the required number of atoms is obtained. The second scheme terminates
  its iteration based on thresholding residual signal power per iteration, which we refer to as SOMP-thresholding (SOMPT).
   When noise power is bounded, i.e., $\| \bN\|_2\le \epsilon$,  we identify the conditions on which the exact support recovery is guaranteed for both SOMPS and SOMPT.
   Unlike the performance guarantee in \cite{Wang2013Performance}, which is based on the Frobenius norm of the noise matrix, we  provide a tighter bound
based on the $\ell_2$-norm of the noise matrix, which leads to new interesting characterizations for SOMP.
  \item
  When noise is unbounded, we derive lower bounds of SRP for both SOMPS and SOMPT algorithms, which depend on a specific distribution of $\ell_2$-norm of the noise matrix.
  Then we focus on a practical scenario when noise is random Gaussian, and derive that the lower bound of SRP follows the Tracy-Widom law distribution \cite{TW2, TW1}.
  From the identified SRP bound, we establish the required noise level, the number of sparse vectors $d$, the number of measurements, and the value of mutual coherence for successful support recovery. We show that the number of
sparse vectors needs to grow proportionally to the noise variance, and the mutual coherence of the measurement
matrix must decrease proportionally to the noise standard
deviation for successful support recovery.

  \item Through numerical simulations, we corroborate the theoretical findings for both SOMPS and SOMPT.
We also illustrate the effect of other factors, such as sparsity $L$, the number of measurements $M$, and the number of sparse vectors $d$, on the recovery performance of noisy SOMPS and SOMPT.
\end{itemize}

\subsection{Paper Organization and Notation}
The paper is organized as follows. In Section \ref{Sproblem formulation}, we introduce the support recovery problem, and present the SOMPS and SOMPT algorithms.
In Section \ref{section preliminary}, some preliminaries and existing results about SOMP are presented.
In Section \ref{section bound noise}, the performances of SOMPS and SOMPT are analyzed for the bounded noise. Then, in Section \ref{section unbounded noise},
we present the performance guarantee of SOMPS and SOMPT when the noise is unbounded, and the case with Gaussian noise is discussed in detail.
The simulation results and conclusions are presented in Section \ref{Ssimulation} and Section \ref{SConclusion}, respectively.

\emph{Notation:}  A bold lower case letter $\mathbf{a}$ is a vector and a bold capital letter $\mathbf{A}$ is a matrix. ${{\mathbf{A}}^{T}}$, ${{\mathbf{A}}^{-1}}$,  ${{\left\| \mathbf{A} \right\|}_{F}}$, $\|  \bA\|_2$ and $\|  \bA\|_{\infty}$ are, respectively, the transpose, inverse, Frobenius norm, $\ell_2$-norm and $\ell_{\infty}$-norm of matrix $\bA$. ${{\left\| \mathbf{a} \right\|}_{1}}$ and ${{\left\| \mathbf{a} \right\|}_{2}}$ are $\ell_1$-norm and $\ell_2$-norm of vector $\mathbf{a}$. $\bA^{\dagger} = (\bA^T \bA)^{-1}\bA^T$ denotes the pseudo inverse of $\bA$ {with linearly independent columns}. ${{[\mathbf{A}]}_{:.i}}$, ${{[\mathbf{A}]}_{i,:}}$, ${{[\mathbf{A}]}_{i,j}}$, and $[\ba]_i$ are, respectively, the $i$th column vector, $i$th row vector, $i$th row and $j$th column entry of $\mathbf{A}$, and $i$th entry of vector $\mathbf{a}$.
${[\mathbf{A}]}_{:,\mathcal{S}}$ denotes a sub-matrix of $\bA$ constructed by taking the columns indexed by set $\mathcal{S}$.
We use $\ba \in \bA$ to
denote a vector $\ba$ chosen from the columns of $\bA$.  $\lambda_{\min}(\bA)$ returns the minimal eigenvalue of $\bA$.
$\R^+$ denotes the set of positive real numbers. $|\mathcal{S}|$ denotes the cardinality of the set $\mathcal{S}$.

\section{Background} \label{Sproblem formulation}

\begin{algorithm} [t]
\caption{SOMPS}
\label{alg_SOMP}
\begin{algorithmic} [1]
\STATE Input: The observations $\bY $, the measurement matrix  $\bPhi$, sparsity level $L$.
\STATE Initialization: Support set $\widehat{\Omega}^{(0)}= \emptyset$, residual matrix $\bR^{(0)} = \bY$.
\FOR{$l = 1$ to $L$}
\STATE Select the largest index $\eta = \argmax \limits _{i=1,\cdots,N} \| [\bPhi]_{:,i}^T \bR^{(l-1)} \|_2$. \label{select rule}
{\STATE Update the support set: $\widehat{\Omega}^{( l)}= \widehat{\Omega}^{( l-1)} \bigcup \{\eta\}$. \label{update rule}}
\STATE Update the residual matrix: $\bR^{(l)} =\bY- [{\bPhi}]_{:,\widehat{\Omega}^{( l)} } [{\bPhi}]_{:,\widehat{\Omega}^{( l)} }^{\dagger}\bY  $. \label{alg update}
\ENDFOR
\STATE Output: $\widehat{\Omega}^{( L)}$.
\end{algorithmic}
\end{algorithm}

This section introduces the SOMPS and SOMPT algorithms and outlines our motivation to utilize the mutual coherence for analyzing these algorithms.

\subsection{SOMP Algorithms} \label{sect:SOMP}
 Given the MMV observation model in  \eqref{obOMP},
we recall that only $L$ rows of $\bC$ are non-zero. Here,
we denote the row support set of $\bC$  as $\Omega \subset \{1,2,\ldots,N  \}$ with $|\Omega|=L$.
To estimate the $\Omega$ from the observations $\bY=\bPhi \bC+\bN$ in \eqref{obOMP}, in this work, we mainly focus on two specific SOMP algorithms \cite{ChenMMV,Tropp2006Greedy,SOMP_l2,SOMP_l2_T}, where  Algorithm \ref{alg_SOMP} corresponds to SOMPS and Algorithm \ref{alg_SOMP_stop} corresponds to SOMPT.
When the row sparsity of $\bC$ (i.e., $L$) is known as a priori, the iteration of SOMPS is terminated when the required number of atoms are selected as described in Algorithm \ref{alg_SOMP}.
On the other hand, when the row sparsity of $\bC$ is unavailable as a priori,
 a threshold $\tau$ can be introduced to evaluate against the amount of power in the residual matrix $\bR^{(l)}$ at each iteration $l$.
Specifically, when $\|\bR^{(l)}\|_{2}$ is less than the threshold $\tau$ in Step \ref{step tau} of Algorithm \ref{alg_SOMP_stop}, the iteration of SOMPT is terminated.
In summary, when the row sparsity of the signal is known a priori, SOMPS can be applied, ensuring that the output of the algorithm strictly adheres to the required sparsity.
Conversely, when the row sparsity of the signal is unknown, SOMPT is applicable. Thus, SOMPT can be used in a broader range of scenarios where the exact sparsity level is uncertain or unavailable. 
Regarding complexity, since these two algorithms are similar except for the stopping criterion, the computational complexity for SOMPS and SOMPT are both $\mathcal{O}(dLMN)$.

As for each iteration of SOMPS and SOMPT, the active index determined in Step \ref{select rule}, i.e.,
\begin{align}
\eta = \argmax_{i=1,\ldots,N} \| [\bPhi^T \bR^{(l-1)}]_{i,:} \|_2, \nonumber
\end{align}
is added to the previously detected support set $\widehat \Omega^{(l-1)} $ to form $\widehat \Omega^{(l)}$ in Step \ref{update rule}.
It is crucial to recognize that the updated residue $\bR^{(l)}$ in Step \ref{alg update} of Algorithm \ref{alg_SOMP} and Algorithm \ref{alg_SOMP_stop} is orthogonal to the columns of $[{\bPhi}]_{:,\widehat{\Omega}^{( l)} }$. Denoting $\bP^{(l)} =[\bPhi]_{:,\widehat{\Omega}^{(l)} }  [\bPhi]_{:,\widehat{\Omega}^{(l)} }^{\dagger}\in \R^{M \times M}$ and $\bP^{(l)}_{\perp}=\mathbf{I}-{{\mathbf{P}}^{(l)}}$, the residual of $l\text{th}$ iteration is expressed as
\begin{align}
{{\mathbf{R}}^{(l)}}
=\bP^{(l)}_{\perp}\bY=\bP^{(l)}_{\perp}({\bPhi}{\bC}+{{\mathbf{N}}}),\label{RL}
\end{align}
where the columns of residue ${{\mathbf{R}}^{(l)}}$ belong to the column subspace of $\bP^{(l)}_{\perp}$. It is for this reason that it is called orthogonal matching pursuit.

It is worth noting that Algorithm \ref{alg_SOMP} and Algorithm \ref{alg_SOMP_stop} successfully recover the support if and only if each active index determined in Step \ref{select rule} is in the support set, i.e., $\eta \in \Omega$.
In particular, given that the first $l$ iterations of Algorithm \ref{alg_SOMP} and  Algorithm \ref{alg_SOMP_stop} selected $l$ atoms correctly, the following remark gives a condition for selecting the correct atom at the $(l+1)$th iteration.

Assume that the first $l$ iterations of SOMPS in Algorithm \ref{alg_SOMP} and SOMPT in Algorithm \ref{alg_SOMP_stop} selected $l$ correct atoms, i.e., $\widehat{\Omega}^{(l)}\subset \Omega$, and the termination condition is not satisfied at the $(l+1)$th iteration\footnote{The fact that the termination condition is not satisfied at the $(l+1)$th iteration means that $l+1\le L$ for Algorithm \ref{alg_SOMP}, and $\| \bR^{(l)}\|_2 \ge \tau$ for Algorithm \ref{alg_SOMP_stop}.}.
The $(l+1)$th iteration will select the correct atom when  the following holds \cite{Tropp2006Greedy,OMPTT},
\begin{align}
\underset{\mathbf{d}\in [\bPhi]_{:,{\Omega}}}{\mathop{\max }}\,{{\left\| {{\mathbf{d}}^T}{{\mathbf{R}}^{(l)}} \right\|}_{2}}>\underset{\mathbf{d}\in [\bPhi]_{:,{\Omega^c}}}{\mathop{\max }}\,{{\left\| {{\mathbf{d}}^T}{{\mathbf{R}}^{(l)}} \right\|}_{2}}, \label{guarantee con}
\end{align}
where $\Omega^c \subset \{ 1,\ldots ,N\}$ with $|\Omega^c|=N-L$ denotes the {complement} of the set $\Omega$.
The condition in \eqref{guarantee con} is the key to the analysis for the performance guarantee of SOMP.

\begin{algorithm} [t]
\caption{SOMPT}
\label{alg_SOMP_stop}
\begin{algorithmic} [1]
\STATE Input: The observations $\bY $, the measurement matrix  $\bPhi$, the threshold $\tau$.
\STATE Initialization: Support set $\widehat{\Omega}^{(0)}= \emptyset$, residual matrix $\bR^{(0)} = \bY$, iteration number $l=1$.
\WHILE{$\| \bR^{(l-1)}\|_2  \ge  \tau$} \label{step tau}
\STATE Select the largest index $\eta = \argmax \limits _{i=1,\ldots,N} \| [\bPhi]_{:,i}^T \bR^{(l-1)} \|_2$.
{\STATE Update the support set: $\widehat{\Omega}^{( l)}= \widehat{\Omega}^{( l-1)} \bigcup \{\eta\}$. \label{residue of iteration}}
\STATE Update the residual matrix: $\bR^{(l)} =\bY- [{\bPhi}]_{:,\widehat{\Omega}^{( l)} } [{\bPhi}]_{:,\widehat{\Omega}^{( l)} }^{\dagger}\bY  $, and $l  \leftarrow l+1$.
\ENDWHILE
\STATE Output: $\widehat{\Omega}^{(l-1)}$.
\end{algorithmic}
\end{algorithm}

\subsection{Motivations} \label{section motivation}
{
The mutual coherence in \eqref{def of mu} has been utilized to measure the maximal coherence of different columns of the measurement matrix \cite{Donoho2003MIP,DonohoMIP}.
Quantifying the mutual coherence of a matrix is crucial in analyzing and solving various signal processing problems, including the Grassmannian line packing \cite{str2003Grass, love2003Grass}, Riemannian manifold packing \cite{Kim11-1}, support detection \cite{TroppOMP, OMPTT,duanOMP},
and the evaluation of the focusing capabilities of imaging systems \cite{Obermeier2017Sensing}. As for the support recovery, the mutual coherence is also crucial to quantify the guarantee of the successful recovery \cite{OMPTT,TroppGreedy, MuOMP,Mi2017Prob }. For example, it has been shown that OMP can successfully detect the support set in the noiseless scenario when $\mu < {1}/{(2L-1)}$ \cite{MuOMP,TroppGreedy}.
Alternatively, the RIP of the measurement matrix is also an important characteristic for support recovery. When the RIC satisfies $\delta_{L+1} < 1/\sqrt{L+1}$ \cite{sharpOMP}, the exact support recovery is guaranteed for OMP in the noiseless case.
While the mutual coherence value of the measurement matrix can induce the RIC, e.g., $\delta_{L} \le (L-1) \mu$ \cite{foucart2013mathematical,Cai09On}, it is generally challenging to calculate the RIC of a given measurement matrix. Unlike the RIC, the mutual coherence can be calculated for a given measurement matrix.
Moreover, the mutual coherence of the measurement matrix $\bPhi\in\R^{M\times N}$ can be calculated, based on \eqref{def of mu}, and satisfies the Welch bound
 \cite{Welch1974bound, DonohoMIP},
\begin{align}
 \mu \ge \sqrt{\frac{N-M}{M(N-1)}}. \label{bound mu 1}
\end{align}
The synthesis of the measurement matrix that nearly achieves the bound in \eqref{bound mu 1} can be found by using the methodologies in \cite{str2003Grass,Herman2009High,elad_CS}.

While it is true that the MIP reveals its amenability and has been exploited in various signal processing problems, thoroughly understanding the noisy SOMP in terms of the MIP is still nascent.}
To this end, our goal in this work is to provide the guarantee conditions for successful support recovery of Algorithm \ref{alg_SOMP} and Algorithm \ref{alg_SOMP_stop} in terms of
MIP for both bounded and unbounded noise.

\section{Preliminaries} \label{section preliminary}
Before embarking on the guarantee analysis of SOMP, we summarize several results presented in \cite{TroppGreedy,Wang2013Performance,OMPTT}, and derive useful conditions that we will rely on.

\begin{Lemma}[Property of matrix $\bPhi${\cite{TroppGreedy}}]   \label{bound G}
For the model in \eqref{obOMP}, define the constant
\begin{align}
G = \max_{\ba \in  [\bPhi]_{:,\Omega^c}} \lA([\bPhi]_{:,\Omega}^T[\bPhi]_{:,\Omega})^{-1} [\bPhi]_{:,\Omega}^T \ba \rA_1. \label{def G}
\end{align}
Then, the value of $G$ is upper bounded by $G\le \frac{L\mu}{1-(L-1)\mu}$,  where the constant $\mu$ is the mutual coherence of matrix $\bPhi$.
\end{Lemma}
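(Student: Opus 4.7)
The plan is to exploit the near-orthogonality of the columns of $[\bPhi]_{:,\Omega}$ implied by the mutual coherence $\mu$, and to bound the $\ell_1$ norm in \eqref{def G} by a product of an induced matrix $\ell_1$-operator norm and a vector $\ell_1$ norm. Concretely, write $\bA = [\bPhi]_{:,\Omega} \in \R^{M\times L}$ so that the target quantity is $\|(\bA^T\bA)^{-1}\bA^T\ba\|_1$ with $\ba \in [\bPhi]_{:,\Omega^c}$. By sub-multiplicativity of the induced $\ell_1$ operator norm,
\begin{align*}
\|(\bA^T\bA)^{-1}\bA^T\ba\|_1 \le \|(\bA^T\bA)^{-1}\|_{1,1}\,\|\bA^T\ba\|_1,
\end{align*}
so it suffices to bound each factor separately.

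For the second factor, the columns of $\bPhi$ have unit $\ell_2$ norm and $\ba$ is a column of $\bPhi$ indexed outside $\Omega$, so each entry of $\bA^T\ba$ is an inner product of two distinct unit-norm columns of $\bPhi$. By the definition of mutual coherence in \eqref{def of mu}, every such entry has absolute value at most $\mu$, and there are $L$ entries; hence $\|\bA^T\ba\|_1 \le L\mu$.

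For the first factor, decompose $\bA^T\bA = \bI + \bE$, where $\bE$ has zero diagonal (unit-norm columns) and off-diagonal entries bounded in absolute value by $\mu$. The induced $\ell_1$ operator norm equals the maximum absolute column sum, so $\|\bE\|_{1,1} \le (L-1)\mu$. Assuming $(L-1)\mu < 1$ (which is the regime in which the statement is meaningful, since otherwise the bound is vacuous), the Neumann series $(\bI + \bE)^{-1} = \sum_{k\ge 0} (-\bE)^k$ converges and yields
\begin{align*}
\|(\bA^T\bA)^{-1}\|_{1,1} \le \sum_{k=0}^{\infty} \|\bE\|_{1,1}^{k} \le \frac{1}{1-(L-1)\mu}.
\end{align*}

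Combining these two bounds gives $\|(\bA^T\bA)^{-1}\bA^T\ba\|_1 \le \tfrac{L\mu}{1-(L-1)\mu}$, uniformly over $\ba \in [\bPhi]_{:,\Omega^c}$, and taking the maximum yields the claim. The only conceptual care required is choosing the induced $\ell_1$ norm (column-sum norm) rather than $\ell_\infty$ or spectral norm, so that the resulting $\|\bE\|_{1,1}$ bound matches the $(L-1)\mu$ constant exactly; the Neumann-series step is the main technical ingredient but is routine once the correct norm is chosen.
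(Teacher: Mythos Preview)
Your proof is correct and is precisely the standard argument; the paper does not actually prove this lemma but cites it as a preliminary from \cite{TroppGreedy}, where the same Neumann-series bound on $\|(\bA^T\bA)^{-1}\|_{1,1}$ combined with $\|\bA^T\ba\|_1 \le L\mu$ is used.
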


\begin{Lemma}[Minimal eigenvalue inequality \cite{OMPTT}] \label{eigen value bound}
The minimum eigenvalue of $[\bPhi]^T_{:,{\Omega}}  [\bPhi]_{:,{\Omega}}$ is less than or equal to the minimum eigenvalue of $[\bPhi]^T_{:,\widehat{\Omega}_{c}^{(l)}} \bP^{(l)}_{\perp} [\bPhi]_{:,\widehat{\Omega}_{c}^{(l)}}$, where $\widehat{\Omega}_c^{(l)}$ denotes the complement of the selected support set   $\widehat{\Omega}^{(l)}$ over the universe $\Omega$ with $|\widehat{\Omega}_c^{(l)}| = L-l$.
\end{Lemma}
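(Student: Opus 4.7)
The plan is to recognize that the right-hand side matrix $[\bPhi]_{:,\widehat{\Omega}_c^{(l)}}^T \bP^{(l)}_{\perp} [\bPhi]_{:,\widehat{\Omega}_c^{(l)}}$ is precisely the Schur complement of the already-selected block inside the Gram matrix $[\bPhi]^T_{:,\Omega}[\bPhi]_{:,\Omega}$, and then to conclude with a short variational argument. To set this up, I would partition the columns of $[\bPhi]_{:,\Omega}$ (after a symmetric permutation that leaves eigenvalues unchanged) as $\bA_1 = [\bPhi]_{:,\widehat{\Omega}^{(l)}}$ and $\bA_2 = [\bPhi]_{:,\widehat{\Omega}_c^{(l)}}$, so that
\begin{align*}
[\bPhi]^T_{:,\Omega}[\bPhi]_{:,\Omega} = \begin{bmatrix} \bA_1^T \bA_1 & \bA_1^T \bA_2 \\ \bA_2^T \bA_1 & \bA_2^T \bA_2 \end{bmatrix}.
\end{align*}
Since $\bP^{(l)}_{\perp} = \bI - \bA_1 (\bA_1^T \bA_1)^{-1} \bA_1^T$ (well defined because the $l$ already-selected columns of $[\bPhi]_{:,\Omega}$ are linearly independent), a direct expansion gives
\begin{align*}
[\bPhi]_{:,\widehat{\Omega}_c^{(l)}}^T \bP^{(l)}_{\perp} [\bPhi]_{:,\widehat{\Omega}_c^{(l)}} = \bA_2^T \bA_2 - \bA_2^T \bA_1 (\bA_1^T \bA_1)^{-1} \bA_1^T \bA_2,
\end{align*}
which is the Schur complement of $\bA_1^T \bA_1$ in the Gram matrix.

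The main step is then the following variational argument. For an arbitrary unit vector $\bv \in \R^{L-l}$, define $\bu = (\bA_1^T \bA_1)^{-1} \bA_1^T \bA_2 \bv$ and stack $\bz = [-\bu^T,\, \bv^T]^T \in \R^L$. I would compute
\begin{align*}
\bv^T [\bPhi]_{:,\widehat{\Omega}_c^{(l)}}^T \bP^{(l)}_{\perp} [\bPhi]_{:,\widehat{\Omega}_c^{(l)}} \bv = \| \bP^{(l)}_{\perp} \bA_2 \bv \|_2^2 = \| \bA_2 \bv - \bA_1 \bu \|_2^2 = \| [\bPhi]_{:,\Omega} \bz \|_2^2,
\end{align*}
after which the Rayleigh quotient for $[\bPhi]^T_{:,\Omega}[\bPhi]_{:,\Omega}$ gives
\begin{align*}
\| [\bPhi]_{:,\Omega} \bz \|_2^2 \ge \lambda_{\min}\!\parens*{[\bPhi]^T_{:,\Omega}[\bPhi]_{:,\Omega}} \| \bz \|_2^2 \ge \lambda_{\min}\!\parens*{[\bPhi]^T_{:,\Omega}[\bPhi]_{:,\Omega}} \| \bv \|_2^2,
\end{align*}
where the last bound uses the trivial fact $\| \bz \|_2^2 = \| \bu \|_2^2 + \| \bv \|_2^2 \ge \| \bv \|_2^2$. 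Taking the infimum over all unit $\bv$ in the variational characterization of $\lambda_{\min}$ yields the claimed inequality.

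The argument is essentially mechanical once the Schur complement structure is identified, so there is no serious obstacle. The only points requiring a bit of care are (i) justifying invertibility of $\bA_1^T \bA_1$, which holds whenever the $L$ columns indexed by $\Omega$ are linearly independent (a mild hypothesis implicit in the lemma and automatically satisfied, e.g., when $(L-1)\mu < 1$), and (ii) keeping bookkeeping straight on the fact that $\widehat{\Omega}_c^{(l)}$ is the complement of $\widehat{\Omega}^{(l)}$ taken \emph{within} $\Omega$, so that $\widehat{\Omega}^{(l)} \cup \widehat{\Omega}_c^{(l)} = \Omega$ and the block decomposition above is valid.
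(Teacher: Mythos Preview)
Your proof is correct. The paper does not actually supply its own proof of this lemma; it is quoted as a preliminary result from \cite{OMPTT} in Section~\ref{section preliminary} and used later (in the proof of Theorem~\ref{theorem SOMP stop}) without further justification. Your Schur-complement identification together with the variational/Rayleigh-quotient bound is a clean, self-contained argument and is exactly the kind of proof one finds for this fact in the literature; the two mild caveats you flag (invertibility of $\bA_1^T\bA_1$, and that $\widehat{\Omega}_c^{(l)}$ is the complement \emph{within} $\Omega$) are precisely the right points to note and are both implicit in the paper's standing assumptions.
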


Though the condition in \eqref{guarantee con} is necessary and sufficient for the correct selection of the $(l+1)$th atom,  it is not practical to check whether the inequality in \eqref{guarantee con} holds. Because it {depends on prior information} of support $\Omega$.
An alternative condition for \eqref{guarantee con} is derived by plugging the expression of $\bR^{(l)}$ in \eqref{RL} into \eqref{guarantee con} and defining
the following quantities \cite{Wang2013Performance,OMPTT},
\begin{align}
{{Q}^{(l,1)}}&=\underset{\mathbf{d}\in [\bPhi]_{:,\Omega} }{\mathop{\max }}\,{{\left\| {{\mathbf{d}}^T}\bP^{(l)}_{\perp}\bPhi\bC \right\|}_{2}},\label{M1} \\
 {{Q}^{(l,2)}}&=\underset{\mathbf{d}\in [\bPhi]_{:,\Omega^c}}{\mathop{\max }}\,{{\left\| {{\mathbf{d}}^T}\bP^{(l)}_{\perp}\bPhi\bC \right\|}_2} ,\label{M2}\\
 {{Z}^{(l)}}&=\underset{\mathbf{d}\in \bPhi}{\mathop{\max }}\,{{\left\| {{\mathbf{d}}^T}\bP^{(l)}_{\perp} \mathbf{N} \right\|}_{2}}. \label{ZL}
 \end{align}
Then, it has the following lemma.

\begin{Lemma}[\! \cite{Wang2013Performance,OMPTT}] \label{relation M1 M2}
 Let ${{Q}^{(l,2)}} $ and ${{Q}^{(l,1)}}$ be defined in \eqref{M1} and \eqref{M2}, respectively. Then the following inequality holds,
\begin{align}
{{Q}^{(l,2)}} \le G{{Q}^{(l,1)}}, \label{ineq M12}
\end{align}
where $G$ is defined in \eqref{def G}.
\end{Lemma}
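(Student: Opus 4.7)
The plan is to reduce inequality~\eqref{ineq M12} to a scalar Hölder-type bound by expressing any $\ba\in[\bPhi]_{:,\Omega^c}$ in coordinates adapted to $[\bPhi]_{:,\Omega}$. Two facts will do the work: (i) the nestedness $\widehat\Omega^{(l)}\subset\Omega$, which controls how $\bP^{(l)}_\perp$ acts on the component of $\ba$ orthogonal to the column space of $[\bPhi]_{:,\Omega}$, and (ii) the row-sparsity of $\bC$ on $\Omega$, which lets me replace $\bPhi\bC$ by $[\bPhi]_{:,\Omega}[\bC]_{\Omega,:}$. Once those are in place, the desired inequality follows from $\|\bw\|_1\le G$ (Lemma~\ref{bound G}) and a triangle inequality.

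I would begin by fixing $\ba\in[\bPhi]_{:,\Omega^c}$ and writing its least-squares decomposition
$$\ba = [\bPhi]_{:,\Omega}\bw + \bv, \qquad \bw := ([\bPhi]^T_{:,\Omega}[\bPhi]_{:,\Omega})^{-1}[\bPhi]^T_{:,\Omega}\ba,$$
so that $\bv$ is orthogonal to every column of $[\bPhi]_{:,\Omega}$. By the very definition of $G$ in \eqref{def G}, the weight vector satisfies $\|\bw\|_1 \le G$.

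The crux is then to show that $\bv$ drops out when we apply $\bP^{(l)}_\perp\bPhi\bC$ on the right. Because $\widehat\Omega^{(l)}\subset\Omega$, the vector $\bv$ is also orthogonal to $[\bPhi]_{:,\widehat\Omega^{(l)}}$, which gives $\bP^{(l)}_\perp\bv=\bv$; combining this with $\bv^T[\bPhi]_{:,\Omega}=\bzero$ and the row-sparsity identity $\bPhi\bC=[\bPhi]_{:,\Omega}[\bC]_{\Omega,:}$ yields
$$\ba^T\bP^{(l)}_\perp\bPhi\bC \;=\; \bw^T[\bPhi]^T_{:,\Omega}\bP^{(l)}_\perp\bPhi\bC.$$
Denoting the $L$ rows of $[\bPhi]^T_{:,\Omega}\bP^{(l)}_\perp\bPhi\bC\in\R^{L\times d}$ by $\bm_1^T,\ldots,\bm_L^T$, the triangle inequality followed by $\|\bw\|_1\le G$ then gives
$$\bigl\|\bw^T[\bPhi]^T_{:,\Omega}\bP^{(l)}_\perp\bPhi\bC\bigr\|_2 \;\le\; \sum_{j=1}^L |w_j|\,\|\bm_j\|_2 \;\le\; \|\bw\|_1\max_j \|\bm_j\|_2 \;\le\; G\cdot Q^{(l,1)},$$
since $\max_j\|\bm_j\|_2 = \max_{\bd\in[\bPhi]_{:,\Omega}}\|\bd^T\bP^{(l)}_\perp\bPhi\bC\|_2 = Q^{(l,1)}$. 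Taking the maximum over $\ba\in[\bPhi]_{:,\Omega^c}$ on the left-hand side yields \eqref{ineq M12}.

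The main (and really only) obstacle is the vanishing-residual identity $\bv^T\bP^{(l)}_\perp\bPhi\bC=\bzero$: this is where both structural assumptions---the inclusion $\widehat\Omega^{(l)}\subset\Omega$ and the row-support of $\bC$ lying in $\Omega$---are invoked simultaneously. Everything else collapses to elementary $\ell_1$--$\ell_2$ norm manipulations plus Lemma~\ref{bound G}.
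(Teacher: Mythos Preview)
Your proof is correct and follows essentially the standard argument from the cited references \cite{Wang2013Performance,OMPTT,TroppGreedy}; the paper itself does not include a proof of this lemma but simply quotes it. The only cosmetic difference is that the classical presentation usually argues that the signal residual $\bP^{(l)}_\perp\bPhi\bC$ lies in the column space of $[\bPhi]_{:,\Omega}$ (so one may insert the projector $[\bPhi]_{:,\Omega}[\bPhi]_{:,\Omega}^\dagger$ for free), whereas you decompose $\ba$ instead and show the orthogonal component $\bv$ is killed---these are dual formulations of the same identity and lead to the same $\ell_1$--$\ell_2$ H\"older bound.
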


Using the notations in \eqref{M1}, \eqref{M2}, and \eqref{ZL}, the following lemma expresses the condition in \eqref{guarantee con} in terms of ${{Q}^{(l,1)}}$ and $ {{Z}^{(l)}}$.
{
\begin{Lemma}[Sufficient condition for correct selection at the $(l+1)$th iteration
\cite{Wang2013Performance}] \label{pre condition 1}
Suppose the definitions in  \eqref{M1},  \eqref{M2},  and \eqref{ZL}. If the first $l$ iterations select the correct atoms, the sufficient condition for selecting the correct $(l+1)$th atom is
\begin{align}
{{Q}^{(l,1)}}-{{Q}^{(l,2)}}>2{{Z}^{(l)}}.\label{sufficient condition 1}
\end{align}
\end{Lemma}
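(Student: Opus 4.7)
The plan is to derive the sufficient condition \eqref{sufficient condition 1} directly from the necessary and sufficient condition \eqref{guarantee con} in Remark 1, by inserting the decomposition of the residual $\bR^{(l)}$ from \eqref{RL} and applying the triangle inequality separately on the two sides of \eqref{guarantee con}. The quantities ${Q}^{(l,1)}$, ${Q}^{(l,2)}$, and ${Z}^{(l)}$ are tailor-made for exactly this purpose: they isolate the signal-driven and noise-driven contributions of the residual, so the splitting step should be essentially algebraic.

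First I will substitute $\bR^{(l)}=\bP^{(l)}_{\perp}(\bPhi\bC+\bN)$ into $\max_{\mathbf{d}\in [\bPhi]_{:,\Omega}} \|\mathbf{d}^T\bR^{(l)}\|_2$ and use the reverse triangle inequality:
\begin{align*}
\|\mathbf{d}^T\bR^{(l)}\|_2 \;\geq\; \|\mathbf{d}^T\bP^{(l)}_{\perp}\bPhi\bC\|_2 - \|\mathbf{d}^T\bP^{(l)}_{\perp}\bN\|_2.
\end{align*}
Evaluating at the maximizer $\mathbf{d}^\star\in[\bPhi]_{:,\Omega}$ of the signal term and then upper-bounding the noise term by its maximum over all columns of $\bPhi$, I obtain the lower bound $\max_{\mathbf{d}\in [\bPhi]_{:,\Omega}}\|\mathbf{d}^T\bR^{(l)}\|_2 \geq {Q}^{(l,1)}-{Z}^{(l)}$. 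Symmetrically, applying the forward triangle inequality on the $\Omega^c$ side and using the fact that the max of a sum is at most the sum of the maxes yields $\max_{\mathbf{d}\in [\bPhi]_{:,\Omega^c}}\|\mathbf{d}^T\bR^{(l)}\|_2 \leq {Q}^{(l,2)}+{Z}^{(l)}$.

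Combining the two bounds, a sufficient condition for \eqref{guarantee con} to hold is ${Q}^{(l,1)}-{Z}^{(l)} > {Q}^{(l,2)}+{Z}^{(l)}$, which rearranges exactly to \eqref{sufficient condition 1}. Since the first $l$ atoms are assumed correct and the termination rule has not triggered, Remark~1 guarantees that \eqref{guarantee con} implies the $(l+1)$th selection is correct, which completes the argument.

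I do not expect any essential obstacle here; the statement is effectively a bookkeeping reformulation of Remark~1 via two invocations of the triangle inequality. The only point requiring a bit of care is the noise term: when lower-bounding the $\Omega$ side I must replace $\|\mathbf{d}^{\star T}\bP^{(l)}_{\perp}\bN\|_2$ by $Z^{(l)}$, noting that $[\bPhi]_{:,\Omega}\subset \bPhi$ so this relaxation is valid. Since $\bP^{(l)}_{\perp}$ is a projector (so $\mathbf{d}^T\bP^{(l)}_{\perp} = (\bP^{(l)}_{\perp}\mathbf{d})^T$), no additional geometry is needed.
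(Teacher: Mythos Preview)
Your argument is correct and is exactly the standard derivation: lower-bound the $\Omega$-side of \eqref{guarantee con} by $Q^{(l,1)}-Z^{(l)}$ via the reverse triangle inequality evaluated at the signal-term maximizer, upper-bound the $\Omega^c$-side by $Q^{(l,2)}+Z^{(l)}$ via the forward triangle inequality, and rearrange. The paper does not supply its own proof of this lemma---it is quoted from \cite{Wang2013Performance} as a preliminary---so there is nothing to compare against beyond noting that your derivation matches the intended one.
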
}

{
\begin{Proposition} \label{propositionOne}
Combining \eqref{ineq M12}  with \eqref{sufficient condition 1}, a condition for correct selection at the $(l+1)$th iteration can be given by
\begin{align}
 (1-G){{Q}^{(l,1)}} >2{{Z}^{(l)}}. \label{sufficient condition 1 temp 1}
\end{align}
Furthermore, if $\mu < 1/(2L-1)$ holds to guarantee $1 - G >0$, plugging the bound of $G$ in Lemma \ref{bound G} into  \eqref{sufficient condition 1 temp 1} leads to
\begin{align}
{{Q}^{(l,1)}}>2\frac{1-(L-1)\mu }{1-(2L-1)\mu }{{Z}^{(l)}}. \label{sufficient condition 1 temp}
\end{align}
Thus, when the condition in \eqref{sufficient condition 1 temp} holds, the $(l+1)$th iteration selects the correct atom.
\end{Proposition}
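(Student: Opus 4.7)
The plan is to assemble the statement from three ingredients already established: Lemma~\ref{relation M1 M2} ($Q^{(l,2)}\le G\,Q^{(l,1)}$), Lemma~\ref{pre condition 1} (the selection sufficient condition $Q^{(l,1)}-Q^{(l,2)}>2Z^{(l)}$), and Lemma~\ref{bound G} ($G\le L\mu/(1-(L-1)\mu)$). The entire argument is a short chain of inequalities, so my proof would be mainly algebraic verification.

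First, I would show that \eqref{sufficient condition 1 temp 1} is itself a sufficient condition for correct selection at the $(l+1)$th iteration. By Lemma~\ref{relation M1 M2}, $Q^{(l,2)}\le G\,Q^{(l,1)}$, hence
\begin{equation*}
Q^{(l,1)}-Q^{(l,2)} \;\ge\; Q^{(l,1)}-G\,Q^{(l,1)} \;=\; (1-G)\,Q^{(l,1)}.
\end{equation*}
Therefore, if $(1-G)Q^{(l,1)}>2Z^{(l)}$, then a fortiori $Q^{(l,1)}-Q^{(l,2)}>2Z^{(l)}$, and Lemma~\ref{pre condition 1} yields the correct selection. This is the content of \eqref{sufficient condition 1 temp 1}.

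Next, I would use the assumption $\mu<1/(2L-1)$ to verify that $1-G>0$, so that the inequality in \eqref{sufficient condition 1 temp 1} can be meaningfully rearranged. From $\mu<1/(2L-1)$ one gets $1-(2L-1)\mu>0$, and since $1-(L-1)\mu>0$ as well, the bound in Lemma~\ref{bound G} gives
\begin{equation*}
1-G \;\ge\; 1-\frac{L\mu}{1-(L-1)\mu} \;=\; \frac{1-(2L-1)\mu}{1-(L-1)\mu} \;>\; 0.
\end{equation*}
Plugging this lower bound on $1-G$ into \eqref{sufficient condition 1 temp 1} and dividing by the positive quantity $(1-(2L-1)\mu)/(1-(L-1)\mu)$ yields \eqref{sufficient condition 1 temp}, i.e., the $(l+1)$th iteration selects a correct atom whenever $Q^{(l,1)}>2\,\tfrac{1-(L-1)\mu}{1-(2L-1)\mu}\,Z^{(l)}$.

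There is no real obstacle here; the only subtlety I would be careful about is the direction of the inequality when replacing $G$ by its upper bound, which is why I work with $1-G$ instead of $G$ directly, so that an upper bound on $G$ becomes a lower bound on the coefficient $(1-G)$ multiplying $Q^{(l,1)}$. The monotonicity check and the sign condition $\mu<1/(2L-1)$ together make the chain tight and valid.
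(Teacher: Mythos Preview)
Your proposal is correct and follows exactly the approach the paper takes: the proposition in the paper is essentially self-proving, with the argument being precisely the chain ``Lemma~\ref{relation M1 M2} gives $Q^{(l,1)}-Q^{(l,2)}\ge(1-G)Q^{(l,1)}$, so \eqref{sufficient condition 1 temp 1} implies \eqref{sufficient condition 1}, and then Lemma~\ref{bound G} with $\mu<1/(2L-1)$ yields \eqref{sufficient condition 1 temp}.'' You have simply spelled out the algebra (the computation $1-G\ge(1-(2L-1)\mu)/(1-(L-1)\mu)$ and the sign check) that the paper leaves implicit.
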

}
It is worth noting that the condition $\mu < 1/ \left( {2L - 1} \right)$  is imposed to guarantee  $1-G>0$ in \cref{sufficient condition 1 temp 1}, which is obtained by substituting $G\le \frac{L\mu}{1-(L-1)\mu}$ in \cref{bound G}. Otherwise, if $\mu \ge 1/ \left( {2L - 1} \right)$, 
the condition in \cref{sufficient condition 1 temp 1} may be violated since $Z^{(l)} \ge 0$ , which leads to the incorrect selection of the $(l+1)$th atom. Moreover, one can find that there exists  a measurement matrix $\bPhi$ with mutual coherence $\mu=1/ \left( {2L - 1}\right)$ and  $L$-sparse signal $\bC$ such that $(l+1)$th atom of $\bC$ cannot be correctly selected by SOMP.
Note that the condition $\mu < 1/ \left( {2L - 1} \right)$  is also a necessary condition  for the successful recovery of sparse signal by using OMP \cite{MuOMP,Tropp2006Greedy}. Since SOMP is a general extension of OMP, 
this further confirms the consistency of our analysis with that of OMP \cite{MuOMP}.

Observing \eqref{sufficient condition 1 temp}, computing $Q^{(l,1)}$ in \eqref{M1} to check whether the condition in \eqref{sufficient condition 1 temp} holds still relies on a priori knowledge $\Omega$.
The following lemma provides a bound of $Q^{(l,1)}$ that only depends on the non-zero rows of $\bC$, i.e., does not depend on the a priori knowledge $\Omega$.

\begin{Lemma}[Bound of ${{Q}^{(l,1)}}$ \cite{Wang2013Performance}] \label{bound M1}
The ${{Q}^{(l,1)}}$ defined in \eqref{M1} is lower bounded by
\begin{align}
   {{Q}^{(l,1)}} \ge {{(L-l)}^{-1/2}}(1-(L-1)\mu ){{\lA [\bC]_{\widehat{\Omega}_c^{(l)},:}\rA}_F}.
\end{align}
\end{Lemma}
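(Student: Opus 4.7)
The plan is to reduce $Q^{(l,1)}$ to a quantity involving the Frobenius norm of a block of $\bC$, via a row-$\ell_2$ versus Frobenius comparison, and then close the gap with a minimum-eigenvalue estimate for the restricted Gram matrix.

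\emph{Step 1 (exploit the projector).} After $l$ correct iterations one has $\widehat{\Omega}^{(l)}\subset\Omega$, so every column $[\bPhi]_{:,i}$ with $i\in\widehat{\Omega}^{(l)}$ lies in the range of $\bP^{(l)}$ and therefore $\bP^{(l)}_{\perp}[\bPhi]_{:,i}=\bzero$. Splitting $\Omega=\widehat{\Omega}^{(l)}\cup\widehat{\Omega}_c^{(l)}$ and using the row support of $\bC$, this immediately gives
\[
\bP^{(l)}_{\perp}\bPhi\bC=\bP^{(l)}_{\perp}[\bPhi]_{:,\widehat{\Omega}_c^{(l)}}[\bC]_{\widehat{\Omega}_c^{(l)},:},
\]
while $[\bPhi]_{:,i}^{T}\bP^{(l)}_{\perp}=\bzero^{T}$ for every $i\in\widehat{\Omega}^{(l)}$. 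Hence the maximum defining $Q^{(l,1)}$ is effectively taken over $i\in\widehat{\Omega}_c^{(l)}$, which has cardinality $L-l$.

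\emph{Step 2 (row-norm to Frobenius).} Stack those $L-l$ row vectors into
\[
\bM=[\bPhi]^{T}_{:,\widehat{\Omega}_c^{(l)}}\bP^{(l)}_{\perp}[\bPhi]_{:,\widehat{\Omega}_c^{(l)}}[\bC]_{\widehat{\Omega}_c^{(l)},:}\in\R^{(L-l)\times d}.
\]
Then $Q^{(l,1)}=\max_{i}\|[\bM]_{i,:}\|_{2}$, so the elementary inequality $\|\bM\|_{F}^{2}\le(L-l)\max_{i}\|[\bM]_{i,:}\|_{2}^{2}$ yields $Q^{(l,1)}\ge(L-l)^{-1/2}\|\bM\|_{F}$. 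Writing $\bM=\bA\,[\bC]_{\widehat{\Omega}_c^{(l)},:}$ with the symmetric positive semi-definite matrix $\bA=[\bPhi]^{T}_{:,\widehat{\Omega}_c^{(l)}}\bP^{(l)}_{\perp}[\bPhi]_{:,\widehat{\Omega}_c^{(l)}}$, a column-wise application of $\|\bA\bx\|_{2}\ge\lambda_{\min}(\bA)\|\bx\|_{2}$ (valid because $\sigma_{\min}(\bA)=\lambda_{\min}(\bA)$ for PSD $\bA$) gives $\|\bM\|_{F}\ge\lambda_{\min}(\bA)\|[\bC]_{\widehat{\Omega}_c^{(l)},:}\|_{F}$.

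\emph{Step 3 (lower bound on $\lambda_{\min}(\bA)$).} Invoking \cref{eigen value bound} yields $\lambda_{\min}(\bA)\ge\lambda_{\min}([\bPhi]^{T}_{:,\Omega}[\bPhi]_{:,\Omega})$. Because $\bPhi$ has unit-norm columns and mutual coherence $\mu$, the Gram matrix $[\bPhi]^{T}_{:,\Omega}[\bPhi]_{:,\Omega}$ has unit diagonal and off-diagonal entries of magnitude at most $\mu$, so the Gershgorin disk theorem gives $\lambda_{\min}([\bPhi]^{T}_{:,\Omega}[\bPhi]_{:,\Omega})\ge 1-(L-1)\mu$. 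Chaining the three inequalities produces the claimed bound. The only real obstacle is bookkeeping: being explicit that the reduction in Step~1 uses both the row support of $\bC$ and the inclusion $\widehat{\Omega}^{(l)}\subset\Omega$, and that $\bA$ is genuinely symmetric PSD (so $\sigma_{\min}(\bA)=\lambda_{\min}(\bA)$) in the column-wise estimate of Step~2; everything else is a direct application of \cref{eigen value bound} and a standard Gershgorin bound.
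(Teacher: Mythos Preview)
The paper does not actually prove \cref{bound M1}; it is quoted as a result from \cite{Wang2013Performance}. Your argument is correct and is precisely the standard derivation: the projector kills the already-selected columns, the max-row-$\ell_2$ versus Frobenius comparison introduces the $(L-l)^{-1/2}$ factor, and the pair \cref{eigen value bound} plus Gershgorin supplies the $(1-(L-1)\mu)$ factor. This is the same toolkit the paper itself deploys in the proof of \cref{theorem SOMP stop} (steps $(a)$--$(d)$ in \eqref{R stop 1}), so your route is entirely in line with the paper's methodology.
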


In summary, according to Lemma \ref{pre condition 1} and Lemma \ref{bound M1}, in order to guarantee the $(l+1)$th iteration selects the correct atom, the sufficient condition in \eqref{sufficient condition 1 temp} can be rewritten as,
\begin{align}
\frac{1-(L-1)\mu}{(L-l)^{1/2}}{{\left\| [\bC]_{\widehat{\Omega}_c^{(l)},:} \right\|}_{F}}>2\frac{1-(L-1)\mu }{1-(2L-1)\mu }{{Z}^{(l)}} ,	\nonumber
\end{align}
which is simplified to
\begin{align}
	{{\left\| [\bC]_{\widehat{\Omega}_c^{(l)},:} \right\|}_{F}}&>\frac{2\sqrt{L-l}}{1-(2L-1)\mu } {{Z}^{(l)}}. \label{Q values}
\end{align}

\section{Guarantee of Recovery Under Bounded Noise} \label{section bound noise}
In this section, we present the recovery guarantee of SOMP when the noise is upper bounded, i.e., $\| \bN \|_2 \le \epsilon$.

\subsection{SOMPS Under Bounded Noise}
First of all, we extend the analysis for OMP in \cite{OMPTT} to the MMV model in \eqref{obOMP} and establish a condition that guarantees the successful recovery of the support set by using SOMPS in Algorithm \ref{alg_SOMP}.

 \begin{Theorem} \label{theorem SOMP}
 Given the signal model in \eqref{obOMP}  and the mutual coherence $\mu$ of the measurement matrix $\bPhi$ satisfying $\mu < 1/(2L-1)$, if
\begin{align}
C_{\min} > \frac{2 \| \bN \|_2}{{1-(2L-1)\mu }}, \label{theorem alg1}
\end{align}
where $C_{\min}= \min\limits _{i\in \Omega}  \lA [\bC]_{i,:} \rA_2$, then SOMPS in Algorithm \ref{alg_SOMP} successfully recovers the support set $\Omega$.
In particular, when the noise
matrix is bounded $\| \bN\|_2 \le \epsilon$, the condition in \eqref{theorem alg1} becomes
\begin{align}
C_{\min} > \frac{2 \epsilon}{{1-(2L-1)\mu }}.\label{theorem bounded noise alg1}
\end{align}
 \end{Theorem}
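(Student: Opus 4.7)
The plan is induction on the iteration index $l$, using the sufficient condition \eqref{Q values} (which itself is built from Lemma~\ref{bound M1} and Proposition~\ref{propositionOne}) as the workhorse. The heart of the argument is that, after the bounds established below, both sides of \eqref{Q values} carry a common factor $\sqrt{L-l}$ that cancels, leaving precisely the iteration-independent threshold \eqref{theorem alg1}.

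First, I would upper bound the noise quantity $Z^{(l)}$ defined in \eqref{ZL}. Each column of $\bPhi$ has unit $\ell_2$-norm by assumption, and $\bP^{(l)}_\perp = \bI - \bP^{(l)}$ is an orthogonal projector so $\|\bP^{(l)}_\perp\|_2 \le 1$. Hence for every column $\bd$ of $\bPhi$, $\|\bd^T \bP^{(l)}_\perp \bN\|_2 \le \|\bP^{(l)}_\perp \bN\|_2 \le \|\bN\|_2$, which yields $Z^{(l)} \le \|\bN\|_2$.

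Second, I would lower bound the signal term in \eqref{Q values}. Under the inductive hypothesis that the first $l$ atoms selected are correct, $\widehat{\Omega}^{(l)} \subset \Omega$ with $|\widehat{\Omega}^{(l)}| = l$, so $\widehat{\Omega}_c^{(l)} \subset \Omega$ has cardinality $L-l$. Consequently,
\[\|[\bC]_{\widehat{\Omega}_c^{(l)},:}\|_F^2 \;=\; \sum_{i \in \widehat{\Omega}_c^{(l)}} \|[\bC]_{i,:}\|_2^2 \;\ge\; (L-l)\,C_{\min}^2,\]
i.e., $\|[\bC]_{\widehat{\Omega}_c^{(l)},:}\|_F \ge \sqrt{L-l}\,C_{\min}$. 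Substituting both bounds into \eqref{Q values}, a sufficient condition for correct selection at iteration $l+1$ becomes
\[\sqrt{L-l}\,C_{\min} \;>\; \frac{2\sqrt{L-l}}{1-(2L-1)\mu}\,\|\bN\|_2,\]
which, after cancelling $\sqrt{L-l}$, is exactly the hypothesis \eqref{theorem alg1}. The assumption $\mu < 1/(2L-1)$ is used to ensure $1-(2L-1)\mu > 0$ so the bound is meaningful and Proposition~\ref{propositionOne} is applicable.

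To close the induction, the base case $l=0$ holds trivially since $\widehat{\Omega}^{(0)} = \emptyset \subset \Omega$, and the argument above then shows that each iteration $l+1$ for $l=0,1,\ldots,L-1$ picks an atom inside $\Omega$. After all $L$ iterations SOMPS returns $\widehat{\Omega}^{(L)} = \Omega$. The specialization \eqref{theorem bounded noise alg1} is then immediate by substituting the uniform bound $\|\bN\|_2 \le \epsilon$. There is no genuine obstacle beyond recognizing the cancellation of the $\sqrt{L-l}$ factors on both sides of \eqref{Q values}, which is what makes the final threshold uniform across iterations; with that observation in hand, the preliminaries carry all the structural weight.
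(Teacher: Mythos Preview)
Your proposal is correct and follows essentially the same route as the paper: bound $Z^{(l)}\le\|\bN\|_2$ via the unit-norm columns and the projector property, lower bound $\|[\bC]_{\widehat{\Omega}_c^{(l)},:}\|_F$ by $\sqrt{L-l}\,C_{\min}$, and cancel the common $\sqrt{L-l}$ in \eqref{Q values}. The paper compresses the Frobenius lower bound and the cancellation into the phrase ``after standard manipulations'' leading to \eqref{C with L}, and leaves the induction implicit; your write-up simply makes both of these explicit.
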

 \begin{proof}
 First off, a sufficient condition that guarantees the inequality in \eqref{Q values} can be given by
\begin{align}
{{\left\| {{[\bC]}_{i,:}} \right\|}_{2}}> \frac{2{{Z}^{(l)}}}{1-(2L-1)\mu }, \forall i \in \Omega. \label{C with L}
\end{align}
 Moreover, it is noted from \eqref{ZL} that
\begin{align}
{{Z}^{(l)}}=\underset{\mathbf{d}\in \bPhi}{\mathop{\max }}\,{{\| {{\mathbf{d}}^T}\bP^{(l)}_{\perp}\mathbf{N} \|}_{2}}
\le \underset{\| \mathbf{d}\|_2 = 1}{\mathop{\max }}\,{{\left\| \bd^T {\mathbf{N} }\right\|}_{2}}
= \lA{\bN}\rA_2.\label{bound for Z l}
\end{align}
Combining \eqref{C with L} and \eqref{bound for Z l}, if
\begin{align}
{{\left\| {{[\bC]}_{i,:}} \right\|}_{2}}> \frac{2 \| \bN\|_2 }{1-(2L-1)\mu}, \forall i\in \Omega, \nonumber
\end{align}
and the  Algorithm \ref{alg_SOMP} terminates when $l=L$, SOMPS
successfully recovers the support $\Omega$. This concludes the proof.
 \end{proof}


According to Theorem \ref{theorem SOMP}, the exact recovery of the support set is guaranteed for Algorithm \ref{alg_SOMP} if $C_{\min}$ is lower bounded by the right-hand side of \eqref{theorem alg1}. The derived lower bound is dependent on the noise level $\| \bN\|_2$, sparsity level $L$, and the mutual coherence $\mu$. When the values of $L$ and $\mu$ are fixed, in order to guarantee the successful recovery of the support set, the value of $C_{\text{min}}$ should be proportional to the value of $\| \bN\|_2$.
Note that this reveals a tighter bound than the Frobenius-norm bound in \cite{Wang2013Performance} because of the fact that $\| \bN\|_2 \le \| \bN \|_F$. {With this tighter bound, we can obtain a tighter SRP guarantee when the noise is unbounded in Section \ref{section unbounded noise}.}
In particular,
Theorem \ref{theorem SOMP} is a generalization of OMP in \cite{OMPTT}, where the number of sparse vectors $d=1$.

\begin{Remark} \label{condition of mu somps}
By Theorem \ref{theorem SOMP}, we can obtain the condition of  $\mu$ to guarantee the successful recovery of SOMPS as follows
\begin{align}
\mu < \frac{1-  {2\| \bN \|_2}/{C_{\text{min}}}}{2L-1}. \label{condition mu}
\end{align}
We note that the right-hand side of \eqref{condition mu} is smaller
 than the noiseless case, i.e., $\mu < 1/(2L-1)$ \cite{TroppGreedy,MuOMP}, which is due to the noisy measurements.
When the noise is bounded, i.e., $\| \bN \|_2\le \epsilon$, the condition in \eqref{condition mu} is given by $\mu < \frac{1-  {2\epsilon}/{C_{\text{min}}}}{2L-1}.$
\end{Remark}

\begin{Remark} \label{condition of M}
If \eqref{condition mu} is combined with the Welch bound in \eqref{bound mu 1}, we obtain a condition for the performance guarantee of SOMPS when measurement matrix $\bPhi$ achieves the minimal mutual coherence. Specifically, when the following holds
\begin{align} \label{bound wel M}
\sqrt{\frac{N-M}{M(N-1)}} < \frac{1-  {2\epsilon}/{C_{\text{min}}}}{2L-1},
\end{align}
the successful recovery of SOMPS is guaranteed.
\end{Remark}


The effect of the number of measurements $M$ on the performance of SOMPS is of interest.
{Considering $\bN \in \R^{M\times d}$,  the bound of $\| \bN\|_2\le \epsilon$ is a function of the number of measurements $M$.  However, due to the correlation between $M$ and $\epsilon$, it is difficult
to state generally a relationship between $M$ and the recovery performance of SOMPS from \eqref{bound wel M}.}
 Instead, we focus on the case when $\epsilon$ is a constant and $1-  {2\epsilon}/{C_{\text{min}}}>0$. Rearranging the inequality in \eqref{bound wel M} leads to the condition for the required number of $M$ as follows
\begin{align}
M > \frac{N}{\left(\frac{1-  {2\epsilon}/{C_{\text{min}}}}{2L-1}\right)^2 (N-1)+1}. \label{bounded for M}
\end{align}
It means that when the number of measurements $M$ is larger than the right hand side of \eqref{bounded for M}, the successful support recovery can be guaranteed for all sparse signals $\bC$ with sparsity $L$ and $C_{\min}$. In particular, when $N\gg L$ and noise level is low $ {\epsilon}/{C_{\text{min}}} \approx 0$, the condition in \eqref{bounded for M} can be approximated to $M > (2L-1)^2$.
Compared to the well-known bound on the required number of measurements $M=O(L\log(N))$ in \cite{TroppOMP,Candes2006Robust}, the derived bound $M > (2L-1)^2$ from \eqref{bounded for M} is indeed independent of $N$ and becomes tighter as the matrix $\bC$ becomes more sparse, i.e., as $L $ decreases.\footnote{Note that the condition in \eqref{bounded for M} is derived in a different manner from the analysis of phase transition in \cite{TroppOMP,Candes2006Robust}. In our work, the measurement matrix is deterministic, and the guarantee condition is for all sparse signals $\bC$ with the sparsity $L$  and required $C_{\min}$.
The measurement matrix is random in \cite{TroppOMP,Candes2006Robust}, and the condition for the number of measurements is in a statistical sense.}

\subsection{SOMPT Under Bounded Noise}
Unlike SOMPS in Algorithm \ref{alg_SOMP}, the stopping criterion of SOMPT is determined by the threshold value $\tau$ in Algorithm \ref{alg_SOMP_stop}.
The following theorem provides a sufficient condition for the successful support recovery of SOMPT.
 \begin{Theorem} \label{theorem SOMP stop}
 Given the signal model in \eqref{obOMP}, suppose that the mutual coherence $\mu$ of the measurement matrix $\bPhi$ satisfies $\mu < 1/(2L-1)$, the noise is bounded by $\|\bN\|_2 \leq \epsilon$, and the threshold value  is $\tau = \epsilon$ for SOMPT in Algorithm \ref{alg_SOMP_stop}. Then if the following condition holds,
\begin{align}
C_{\min} > \frac{2 \epsilon}{{1-(2L-1)\mu }},  \label{assump theorem 2}
\end{align}
 Algorithm \ref{alg_SOMP_stop} successfully recovers the support set $\Omega$.
 \end{Theorem}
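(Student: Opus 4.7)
The plan is to mirror the proof of Theorem~\ref{theorem SOMP} for the correct-atom-selection part and then separately analyze when the threshold-based stopping rule triggers. The proof splits into three claims: (i) whenever the loop runs with $\widehat{\Omega}^{(l)}\subset\Omega$ and $l<L$, the next iteration picks an atom from $\Omega$; (ii) the residual norm strictly exceeds $\tau=\epsilon$ for every $l<L$, so the while-loop cannot terminate prematurely; and (iii) once $\widehat{\Omega}^{(L)}=\Omega$, the residual norm drops at or below $\tau$, so the loop exits with the correct support.

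For (i), I would reuse the chain $\eqref{Q values}\Rightarrow\eqref{C with L}$ already used in the proof of Theorem~\ref{theorem SOMP}: under the induction hypothesis $\widehat{\Omega}^{(l)}\subset\Omega$, Proposition~\ref{propositionOne} together with Lemma~\ref{bound M1} and the bound $Z^{(l)}\le\|\bN\|_2\le\epsilon$ reduces correct selection to precisely $C_{\min}>2\epsilon/(1-(2L-1)\mu)$, which is the hypothesis~\eqref{assump theorem 2}. Thus, conditional on the while-loop still being active, the $(l+1)$th iteration cannot pick an atom outside $\Omega$.

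For (ii), I would lower-bound the residual via the reverse triangle inequality, $\|\bR^{(l)}\|_2\ge\|\bP^{(l)}_{\perp}\bPhi\bC\|_2-\|\bP^{(l)}_{\perp}\bN\|_2$. Since every column of $\bPhi$ has unit norm, $Q^{(l,1)}\le\|\bP^{(l)}_{\perp}\bPhi\bC\|_2$, and Lemma~\ref{bound M1} combined with $\|[\bC]_{\widehat{\Omega}_c^{(l)},:}\|_F\ge\sqrt{L-l}\,C_{\min}$ gives $Q^{(l,1)}\ge(1-(L-1)\mu)\,C_{\min}$. Invoking~\eqref{assump theorem 2} and the inequality $1-(L-1)\mu>1-(2L-1)\mu>0$ (which is valid under $\mu<1/(2L-1)$) yields $\|\bP^{(l)}_{\perp}\bPhi\bC\|_2>2\epsilon$. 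Because $\|\bP^{(l)}_{\perp}\bN\|_2\le\|\bN\|_2\le\epsilon$, this produces $\|\bR^{(l)}\|_2>\epsilon=\tau$ for every $l<L$, so the loop keeps running until iteration $L$ is completed.

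For (iii), once $\widehat{\Omega}^{(L)}=\Omega$, the noise-free part of the residual vanishes: $\bPhi\bC=[\bPhi]_{:,\Omega}[\bC]_{\Omega,:}$ lies in the column span of $[\bPhi]_{:,\widehat{\Omega}^{(L)}}$, hence $\bP^{(L)}_{\perp}\bPhi\bC=\bzero$ and $\bR^{(L)}=\bP^{(L)}_{\perp}\bN$, giving $\|\bR^{(L)}\|_2\le\|\bN\|_2\le\epsilon=\tau$. The main obstacle I anticipate is exactly at this last step: if the boundary case $\|\bR^{(L)}\|_2=\tau$ occurs, the weak-inequality while-condition in Step~\ref{step tau} would fire one extra iteration and admit a spurious atom from $\Omega^c$. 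I would handle this either by invoking the fact that $\bP^{(L)}_{\perp}$ has the nontrivial kernel $\mathrm{span}([\bPhi]_{:,\Omega})$ (so that $\|\bP^{(L)}_{\perp}\bN\|_2<\|\bN\|_2$ generically) to obtain a strict inequality, or, as is standard in the SOMP literature, by interpreting the stopping test in strict form. With this resolved, combining (i)--(iii) yields $\widehat{\Omega}^{(L)}=\Omega$ as claimed.
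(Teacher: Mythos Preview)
Your proposal is correct and follows the same three-part skeleton as the paper: correct atom selection via Theorem~\ref{theorem SOMP}, non-termination for $l<L$, and termination at $l=L$. The only substantive difference is in step~(ii). You lower-bound $\|\bP^{(l)}_{\perp}\bPhi\bC\|_2$ by routing through $Q^{(l,1)}$: the unit-norm columns give $Q^{(l,1)}\le\|\bP^{(l)}_{\perp}\bPhi\bC\|_2$, and Lemma~\ref{bound M1} together with $\|[\bC]_{\widehat{\Omega}_c^{(l)},:}\|_F\ge\sqrt{L-l}\,C_{\min}$ yields $Q^{(l,1)}\ge(1-(L-1)\mu)\,C_{\min}$, whence $\|\bP^{(l)}_{\perp}\bPhi\bC\|_2>2\epsilon$ via $1-(L-1)\mu>1-(2L-1)\mu$. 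The paper instead rewrites $\bP^{(l)}_{\perp}\bPhi\bC=\bP^{(l)}_{\perp}[\bPhi]_{:,\widehat{\Omega}_c^{(l)}}[\bC]_{\widehat{\Omega}_c^{(l)},:}$, applies the minimum-eigenvalue inequality of Lemma~\ref{eigen value bound} and the Gershgorin disk theorem to obtain $\|\bP^{(l)}_{\perp}\bPhi\bC\|_2\ge\sqrt{1-(L-1)\mu}\,\|[\bC]_{\widehat{\Omega}_c^{(l)},:}\|_2$, and then closes with $\sqrt{1-(L-1)\mu}\ge 1-(2L-1)\mu$. The paper's bound is slightly sharper (factor $\sqrt{1-(L-1)\mu}$ versus your $1-(L-1)\mu$), but both suffice for the conclusion; your route is arguably more economical since it reuses Lemma~\ref{bound M1} and avoids Lemma~\ref{eigen value bound} and Gershgorin entirely. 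Your concern about the boundary case $\|\bR^{(L)}\|_2=\tau$ is legitimate: the paper also proves only $\|\bR^{(L)}\|_2\le\epsilon$ against the weak-inequality while-condition, so this edge case is treated identically (i.e., tacitly) in both arguments.
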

\begin{proof}
By Theorem \ref{theorem SOMP}, we only need to prove that the Algorithm \ref{alg_SOMP_stop} terminates correctly, i.e., $l=L$.
In other words, it is sufficient to show that when $l < L$ we have $\| \bR^{(l)}\|_2 > \epsilon$, and when $l  =  L$ we have $\| \bR^{(l)}\|_2 \le \epsilon$.
First of all, when $l  =  L$, the following holds,
\begin{align} \label{proof of RL}
\| \bR^{(L)}\|_2  &= \| \bP^{(L)}_{\perp}\bPhi\bC + \bP^{(L)}_{\perp}\bN \|_2 \nonumber\\
&=\|  \bP^{(L)}_{\perp} \bN \|_2 \nonumber \\
&\le  \| \bN \|_2
\le  \epsilon.
\end{align}
When $l < L$,  one can have the following,
\begin{align}
\| \bR^{(l)}\|_2&=  \| \bP^{(l)}_{\perp}\bPhi\bC + \bP^{(l)}_{\perp}\bN \|_2 \nonumber\\
&\ge   \| \bP^{(l)}_{\perp}\bPhi\bC\|_2  -\| \bP^{(l)}_{\perp}\bN \|_2  \nonumber \\
&\ge  \| \bP^{(l)}_{\perp}\bPhi\bC\|_2 -\epsilon \nonumber\\
&\overset{(a)}{=} \| \bP^{(l)}_{\perp} [\bPhi]_{:,\widehat{\Omega}_{c}^{(l)}}  [\bC]_{\widehat{\Omega}_c^{(l)},:} \|_2- \epsilon \nonumber \\
&\overset{(b)}{\ge} \sqrt{\lambda_{\text{min}}\left([\bPhi]^T_{:,\widehat{\Omega}_{c}^{(l)}} \bP^{(l)}_{\perp} [\bPhi]_{:,\widehat{\Omega}_{c}^{(l)}} \right) }\| [\bC]_{\widehat{\Omega}_c^{(l)},:} \|_2- \epsilon \nonumber\\
&\overset{(c)}{\ge} \sqrt{\lambda_{\text{min}}\left([\bPhi]^T_{:,{\Omega}}  [\bPhi]_{:,{\Omega}} \right) }\| [\bC]_{\widehat{\Omega}_c^{(l)},:} \|_2- \epsilon \nonumber\\
&\overset{(d)}{\ge}  \sqrt{1-(L-1)\mu}\| [\bC]_{\widehat{\Omega}_c^{(l)},:} \|_2- \epsilon, \label{R stop 1}
\end{align}
{where $(a)$ is due to the definition of projection matrix $\bP^{(l)}_{\perp}$, i.e., $\bP^{(l)}_{\perp} =\bI- [\bPhi]_{:,\widehat{\Omega}^{(l)} }  [\bPhi]_{:,\widehat{\Omega}^{(l)} }^{\dagger}$, $(b)$ follows from the idempotency and symmetry properties of orthogonal projection matrices, $(c)$ is due to Lemma \ref{eigen value bound}, and $(d)$ is due to the Gershgorin disk theorem \cite{gershgorin1931uber}}.

{
Given the condition in \eqref{assump theorem 2}, the first term on the right hand side of \eqref{R stop 1} can be further lower bounded by
\begin{align}
\sqrt{1-(L-1)\mu}\| [\bC]_{\widehat{\Omega}_c^{(l)},:} \|_2
&\ge \sqrt{1-(L-1)\mu} \min_{i \in \Omega}\| [\bC]_{i,:}\|_2 \nonumber \\
 &\overset{(a)}{>} \frac{2 \sqrt{1-(L-1)\mu}\epsilon }{1-(2L-1)\mu }\nonumber \\
  &\overset{(b)}{\ge}2\epsilon, \label{R stop 2}
\end{align}
where $(a)$ is due to \eqref{assump theorem 2} and $(b)$ follows from the fact that $\sqrt{1-(L-1)\mu} \ge 1-(2L-1)\mu$. Combining \eqref{R stop 1} with \eqref{R stop 2} leads to $\| \bR^{(l)}\|_2 > \epsilon, \forall l <L $. This concludes the proof.}
\end{proof}

Based on the results of Theorem \ref{theorem SOMP stop}, if we further assume that the measurement matrix $\bPhi$ achieves the Welch bound in \eqref{bound mu 1}, the number of measurements satisfies
\begin{align}
M > \frac{N}{\left(\frac{1-  {2 \epsilon}/{C_{\text{min}}}}{2L-1}\right)^2 (N-1)+1}, \nonumber
\end{align}
and the threshold is set to $\tau = \epsilon$ in Algorithm \ref{alg_SOMP_stop}, SOMPT successfully recovers the support set $\Omega$.

\section{Guarantee of Recovery Under Unbounded Random Noise} \label{section unbounded noise}
Unlike the previous section, we establish in this section the recovery guarantee of SOMPS and SOMPT when the noise is unbounded.

\subsection{Analysis of SOMPS Under Unbounded Noise} \label{sec:SOMPS}

\subsubsection{SRP of SOMPS under General Random Noise}

The following theorem quantifies the successful recovery probability (SRP) of SOMPS in Algorithm \ref{alg_SOMP} under general random noise.
{
\begin{Theorem} \label{theorem SOMP random L}
Given the model in \eqref{obOMP}, suppose that mutual coherence $\mu$ of the measurement matrix $\bPhi$ satisfies $\mu< 1/(2L-1)$.  If the CDF $F_{N}(\cdot)$  of $\| \bN\|_2$ is defined by
\begin{align} \label{expression of FN}
\Pr(\|  \bN \|_2 \le x) =  F_{N}(x), \forall x >0,
\end{align}
the SRP  $P_s$ of SOMPS in Algorithm \ref{alg_SOMP} is lower bounded by
 \begin{align}
   P_s \ge F_{N}\left(\frac{C_{\text{min}}{{(1-(2L-1)\mu )}} } {2 }\right), \label{SOMPS SRP}
  \end{align}
  where $C_{\min} = \min\limits _{i\in \Omega}  \lA [\bC]_{i,:} \rA_2$.
\end{Theorem}
\begin{proof}
When $\bN$ is random,  $\| \bN\|_2$ in \eqref{theorem alg1} is also a random variable. Thus, by Theorem \ref{theorem SOMP}, when the condition in \eqref{theorem alg1} holds, SOMPS in Algorithm \ref{alg_SOMP} guarantees the successful recovery of support. To describe such an event, we define  the following two events for an arbitrary $x>0$,
\begin{align}
\mathcal{X}_1 &= \left\{\lA [\bC]_{i,:} \rA_2 > \frac{2 x }{{1-(2L-1)\mu }}, \forall i \in \Omega \right\}, \nonumber \\
\mathcal{X}_2 &= \left\{ \| \bN\|_2 \le x \right\} \nonumber.
\end{align}
Then, according to Theorem \ref{theorem SOMP}, the SRP of SOMPS is lower bounded by $\text{Pr}(\mathcal{X}_1\cap \mathcal{X}_2 )$. Moreover, since $x$ is arbitrary, we let $x={C_{\min}^-  {(1-(2L-1)\mu )}}/{2}$, where $C_{\min}^-$ denotes the left-sided limit of $C_{\min}$. Then
{\begin{align*}
\text{Pr}(\mathcal{X}_1\cap \mathcal{X}_2 )&=\text{Pr}( \mathcal{X}_2 )\\
&=F_{N}\left(\frac{C_{\min}^-  {(1-(2L-1)\mu )}}{2}\right)\\
&\overset{(a)}{=}F_{N}\left(\frac{C_{\min}  {(1-(2L-1)\mu )}}{2}\right),
\end{align*}
where the equality $(a)$ follows from the fact that $F_{N}(\cdot)$ is continuous.} Thus, the SRP of Algorithm \ref{alg_SOMP} satisfies
 \begin{align}
   P_s \ge F_{N}\left(\frac{C_{\text{min}}{(1-(2L-1)\mu )} } {2 }\right).
 \nonumber
  \end{align}
 This concludes the proof.
\end{proof}
}
Compared to the bounded noise case in Theorem \ref{theorem SOMP}, the successful support recovery with unbounded noise is described in terms of SRP.
Seen from Theorem \ref{theorem SOMP random L}, the lower bound of  SRP is determined by the distribution of $\|  \bN\|_2$, i.e., $F_{N}(\cdot)$.

\subsubsection{SRP of SOMPS under Gaussian Noise}
One can obtain the expression of  $F_{N}(\cdot)$ in \eqref{expression of FN} for a given distribution of $\bN$.
In the following, we show the case when the entries in $\bN$ are independent and identically distributed (i.i.d.) Gaussian $\cN(0,\sigma^2)$.
\begin{Proposition} \label{gaussian l2}
Suppose that random matrix $\bN \in \R^{M \times d}$ has entries i.i.d. according to $\cN(0, \sigma^2)$. The CDF of the largest singular value of $\bN$ converges in distribution to the Tracy-Widom law  \cite{TW2, TW1}
 as $M \rightarrow \infty$ and  $d/M \rightarrow \gamma  \in (0,\infty)$,
\begin{align}
\Pr\left(\frac{\| \bN\|_2}{\sigma}\le \sqrt{s\sigma_{M,d} + \mu_{M,d}}\right) \overset{d}{\rightarrow} F_1\left(s \right), \label{guanssian dis}
\end{align}
where
\begin{align*}
 \mu_{M,d}  &=\left(\sqrt{M-\frac{1}{2}}  + \sqrt{d-\frac{1}{2}}\right) ^2, \\
\sigma_{M,d}& =\left (\sqrt{M-\frac{1}{2}}  + \sqrt{d-\frac{1}{2}}\right)\left(\frac{1}{\sqrt{M-\frac{1}{2}}}  + \frac{1}{\sqrt{d-\frac{1}{2}}}  \right) ^{1/3},
\end{align*}
and  the function $F_1(\cdot)$ is the CDF of Tracy-Widom distribution \cite{TW1,TW2}, which is expressed as
\begin{align}
  F_1(s) = \exp\left(-\frac{1}{2} \int_s^{\infty}q(x)+(x-s)q(x) dx\right), \nonumber
  \end{align}
  where $q(x)$ is the solution of Painlev\'{e} equation of type II:
  \begin{align}
  q''(x)=xq(x)+2q(x)^3,~ q(x) \sim \text{Ai}(x), x \rightarrow \infty, \nonumber
  \end{align}
  where $\text{Ai}(x)$ is the Airy function \cite{TW2, TW1}.
  \end{Proposition}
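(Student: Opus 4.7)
The plan is to reduce to the standard Tracy--Widom theorem for real Wishart matrices, for which the centering and scaling constants in the proposition are exactly those used by Johnstone (and later sharpened by Ma and El Karoui). The main work is purely a change of variables; the substantive analytic content (the convergence to $F_1$) is quoted from the cited references \cite{TW1,TW2}.

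First, I would normalize away the variance. Define $\widetilde{\bN} = \bN/\sigma$, so that the entries of $\widetilde{\bN}\in\R^{M\times d}$ are i.i.d.\ $\cN(0,1)$. Since the $\ell_2$ operator norm is positively homogeneous, $\|\bN\|_2 = \sigma\|\widetilde{\bN}\|_2$, and squaring gives the identity
\begin{align}
\|\bN\|_2^2 \;=\; \sigma^2 \lambda_{\max}\!\bigl(\widetilde{\bN}^T\widetilde{\bN}\bigr). \nonumber
\end{align}
Hence, for every $x>0$,
\begin{align}
\Pr\bigl(\|\bN\|_2 \le x\bigr) \;=\; \Pr\!\left(\lambda_{\max}\!\bigl(\widetilde{\bN}^T\widetilde{\bN}\bigr) \le \frac{x^2}{\sigma^2}\right). \nonumber
\end{align}

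Next, I would invoke the Tracy--Widom theorem for the largest eigenvalue of a real Wishart matrix. Since $\widetilde{\bN}^T\widetilde{\bN}$ is a standard real Wishart matrix with $M$ ``samples'' and $d$ ``dimensions,'' the centered and rescaled largest eigenvalue converges in distribution to the Tracy--Widom law of the first kind:
\begin{align}
\frac{\lambda_{\max}\!\bigl(\widetilde{\bN}^T\widetilde{\bN}\bigr) - \mu_{M,d}}{\sigma_{M,d}} \;\overset{d}{\longrightarrow}\; \mathrm{TW}_1, \nonumber
\end{align}
where $\mu_{M,d}$ and $\sigma_{M,d}$ are precisely the constants written in the proposition (the $(\sqrt{M-1/2}+\sqrt{d-1/2})^2$ centering and the corresponding cube-root rescaling), as established in \cite{TW1,TW2}.

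Finally, I would combine the two displays. Writing $F_1$ for the CDF of $\mathrm{TW}_1$ and substituting the threshold $x^2/\sigma^2$ into the rescaled eigenvalue yields
\begin{align}
\Pr\bigl(\|\bN\|_2 \le x\bigr) \;=\; \Pr\!\left(\frac{\lambda_{\max}\!\bigl(\widetilde{\bN}^T\widetilde{\bN}\bigr) - \mu_{M,d}}{\sigma_{M,d}} \le \frac{x^2/\sigma^2 - \mu_{M,d}}{\sigma_{M,d}}\right) \;\overset{d}{\longrightarrow}\; F_1\!\left(\frac{x^2/\sigma^2 - \mu_{M,d}}{\sigma_{M,d}}\right), \nonumber
\end{align}
which is the claim. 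The only subtle point is the use of continuity of $F_1$ to turn convergence in distribution into pointwise convergence of the CDF at the given threshold; this is immediate because $F_1$ is continuous (indeed smooth). In short, the hard work is entirely contained in the cited Tracy--Widom theorem, and the proof reduces to a one-line scaling $\bN \mapsto \bN/\sigma$ followed by a quotation.
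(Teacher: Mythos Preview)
Your proposal is correct and matches the paper's approach: the paper simply states that the proposition ``directly follows the same procedure in \cite{TW2, TW1}, so we omit it,'' and your argument (normalize out $\sigma$, identify $\widetilde{\bN}^T\widetilde{\bN}$ as a real Wishart matrix, quote the Tracy--Widom limit with the Johnstone centering/scaling constants) is exactly the unpacking of that citation. If anything, you have supplied more detail than the paper itself.
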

\begin{proof}
Proposition \ref{gaussian l2} directly follows the derivations provided in \cite{TW2, TW1}. For the  completeness of the paper,  we provide a sketch of the proof in the following.

For a single Wishart matrix $\bA=\bX \bX^T $ with $ \bX \in \R^{p \times n}$ having i.i.d. Gaussian entries, i.e., $\cN(0,1)$, the joint distribution of the eigenvalues of $\bA$ is given by
\begin{align}
f(x_1,\ldots, x_p) = c\prod_{i} w^{1/2}(x_i) \prod_{i < j} (x_i - x_j) , x_1\ge  \ldots \ge x_p, \nonumber
\end{align}
where $w(x)=x^{n-p-1}e^{-x}$ and $c$ is a normalization constant. Then, the CDF of largest eigenvalue $x_{1}$, i.e., $\Pr(x_1 \le t)$,
can be expressed using the concept of Fredholm determinants, as described in
 in \cite{TW2, TW1,johnstone2007high}. By employing asymptotic analysis techniques, we can obtain the final result of distribution of the largest eigenvalue $x_1$ of the Wishart matrix $\mathbf{A}$ in \cref{guanssian dis}.
\end{proof}

To save the computational complexity, we adopt the table lookup method \cite{dataTW} to obtain the value of $F_1(\cdot)$.

\begin{Remark}[Second order accuracy of Tracy-Widom \cite{johnstone2007high}] \label{second order accuracy} It is noted that the convergence in \eqref{guanssian dis} is in distribution, and the accuracy of the convergence result is characterized by
\begin{align}
\left| \Pr\left(\! \frac{\| \bN\|_2}{\sigma}\! \le \!  \sqrt{s\sigma_{M,d} + \mu_{M,d}}\! \right)\!\! -\!\! F_1\left(s \right)\right|
 \!  \le  \! C \exp(-cs)d^{-2/3}, \label{second order bound}
\end{align}
where $C,c$ are some constants.
For convenience, we substitute ${\sigma} \sqrt{s\sigma_{M,d}+ \mu_{M,d}}= x$ in \eqref{second order bound}. Then, the relation in \eqref{second order bound} can be equivalently written as
\begin{align}
\Pr(\| \bN\|_2\le x) =  F_1\left(\frac{ \frac{x^2}{\sigma^2}-\mu_{M,d}}{\sigma_{M,d}} \right) + \mathcal{O}(d^{-2/3}),\label{non-asym guanssian}
\end{align}
where the equality in \eqref{non-asym guanssian} follows from the fact that $f(x)=\mathcal{O}(g(x))$ if there exists a positive number $K$ and $x_0$ such that $|f(x)|\le Kg(x)$, $\forall x\ge x_0$ and $f(x), g(x): \R^+ \rightarrow \R^+$. The motivation of \cref{non-asym guanssian} is that we can apply it to \cref{theorem SOMP random L} for further analysis.
\end{Remark}


\begin{figure}
\centering
\includegraphics[width=0.6 \linewidth]{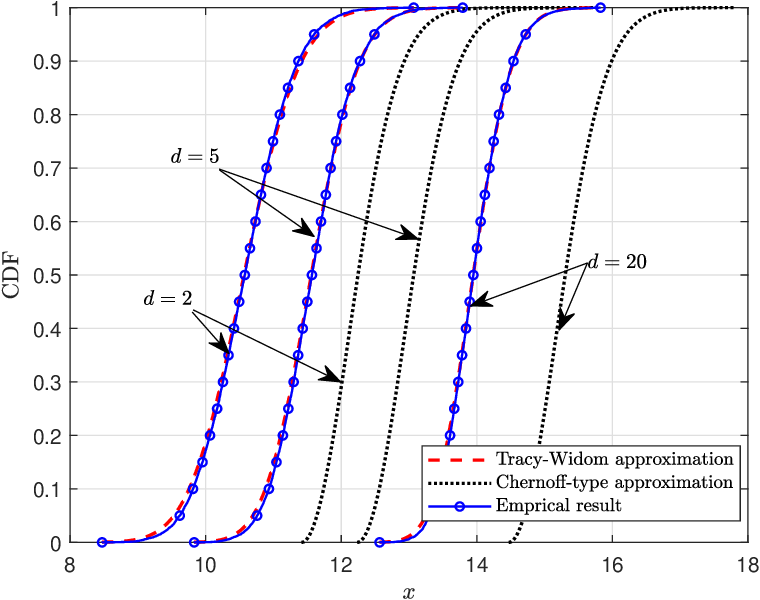}
\caption{Illustration for the approximation of $\text{Pr}(\|\bN\|_2 \le x)$ with Tracy-Widom distribution (asymptotic) in \eqref{guanssian dis} and Chernoff-type bound (non-asymptotic) in \eqref{bound exp N} ($\bN \in \R^{M \times d}, M=100, \sigma = 1$).} \label{check_TW}
\end{figure}

\begin{Remark} \label{expon N norm 2}
Different from \eqref{non-asym guanssian}, when the random matrix $\bN \in \R^{M \times d}$ has entries i.i.d. according to $\cN(0, \sigma^2)$, an alternative bound (Chernoff-type concentration bound) of non-asymptotic distribution of $\| \bN \|_2$ can be described by \cite{davidson2001local},
\begin{align}
\Pr\left(\frac{\| \bN \|_2}{\sigma} \le t+\sqrt{M} +\sqrt{d}\right) \ge 1- \exp(-t^2). \label{bound exp N}
\end{align}
\end{Remark}

While the bound in \eqref{bound exp N} is non-asymptotic, the bound in \eqref{bound exp N} is quite loose compared to the Tracy-Widom law distribution in \eqref{guanssian dis}. To observe it, we compare, in Fig.~\ref{check_TW}, the empirical CDF with the Tracy-Widom approximation in \eqref{guanssian dis} and the Chernoff-type bound in \eqref{bound exp N}. We can observe that Tracy-Widom approximation in \eqref{guanssian dis} is indeed accurate even when $d$ is small (e.g., $d=2$). It is obvious from Fig.~\ref{check_TW} that the bound in \eqref{bound exp N}  is quite loose. This reveals that the convergence in \eqref{guanssian dis} could kick in early when $d \geq 2$. In what follows, we treat \eqref{guanssian dis} as an accurate approximation and derive the SRP conditions.

 \begin{Remark} \label{corrolary SOMP gaussian}
From Remark \ref{second order accuracy} and Theorem \ref{theorem SOMP random L}, when mutual coherence $\mu < 1/(2L-1)$ and the entries of $\bN$ are i.i.d. with $\cN(0,\sigma^2)$, the lower bound of SRP of SOMPS in Algorithm \ref{alg_SOMP} satisfies
 \begin{align}
   P_s \ge F_1\left(\frac{{(1-(2L-1)\mu)^2 C_{\text{min}}^2}-4\sigma^2 \mu_{M,d}}{4\sigma^2 \sigma_{M,d}}\right) + \mathcal{O}(d^{-2/3}), \label{SOMPS guassian SRP}
  \end{align}
  where the inequality in \eqref{SOMPS guassian SRP} is a direct consequence of combining \eqref{SOMPS SRP} with \eqref{non-asym guanssian}.
 \end{Remark}

\subsubsection{SRP Guarantee of SOMPS Under Gaussian Noise}
In some applications, the SRP is desired to be close to one. The theorem below is an extension of Theorem \ref{theorem SOMP random L}, which identifies a condition for $C_{\text{min}}$ to guarantee the required SRP of SOMPS in Algorithm \ref{alg_SOMP}.
\begin{Corollary} \label{theorem required C}
For the signal model in \eqref{obOMP}, assume that mutual coherence $\mu$ of the measurement matrix $\bPhi$ satisfies $\mu < 1/(2L-1)$ and there exist positive $x_{\delta}$ and $\delta$ such that
\begin{align}
\Pr(\|  \bN \|_2 \le x_{\delta})  \ge  1-\delta, \label{delta eq}
\end{align}
where $0<\delta<1 $ is a small number and $x_{\delta}$ depends on $\delta$. If
\begin{align}
C_{\min} > \frac{2 x_{\delta}}{{1-(2L-1)\mu }},  \label{cor condition old}
\end{align}
SOMPS in Algorithm \ref{alg_SOMP}  can successfully recover the support set $\Omega$ with the probability exceeding  $1-\delta$.
\end{Corollary}
\begin{proof}
Similar to the proof of Theorem \ref{theorem SOMP random L}, we define the following two events,
\begin{align}
\mathcal{X}_1 &= \left\{\lA [\bC]_{i,:} \rA_2 > \frac{2 x_{\delta} }{{1-(2L-1)\mu }}, \forall i \in \Omega \right\} \nonumber \\
\mathcal{X}_2 &= \left\{ \| \bN\|_2 \le x_{\delta} \right\} \nonumber.
\end{align}
Then, according to Theorem \ref{theorem SOMP}, the SRP of SOMPS is lower bounded by
$\Pr(\mathcal{X}_1\cap \mathcal{X}_2 )\ge 1-\delta$.
\end{proof}

By Theorem \ref{theorem required C}, the row-sparse matrix $\bC$ in \eqref{obOMP} with $C_{\text{min}}$ satisfying the condition in \eqref{cor condition old} can guarantee the required SRP $1-\delta$.
In particular, when the entries in $\bN$ are i.i.d. according to $\cN(0,\sigma^2)$,
the performance guarantee of SOMPS in Algorithm \ref{alg_SOMP} can be identified as follows.

\begin{Corollary} \label{col required C gaussian}
Suppose $\delta>0 $ is a small number and {$F_1^{-1}(\cdot)$ is the inverse function of $F_1(\cdot)$}. If the noise $\bN$ in \eqref{obOMP} has entries i.i.d. according to $\cN(0,\sigma^2)$ and
\begin{align}
C_{\min} > \frac{2 \sqrt{(F_1^{-1}(1-\delta)\sigma_{M,d}+\mu_{M,d})\sigma^2}}{{1-(2L-1)\mu }}, \label{cor condition}
\end{align}
or equivalently
\begin{align}
\sigma < \frac{C_{\min}{(1-(2L-1)\mu )}  }{2 \sqrt{(F_1^{-1}(1-\delta)\sigma_{M,d}+\mu_{M,d})}}, \label{sigma condition SOMPS}
\end{align}
 then Algorithm \ref{alg_SOMP} successfully recovers the support set $\Omega$ with the probability exceeding $1-\delta + \mathcal{O}(d^{-2/3})$.
 \begin{proof}
From the second order accuracy of Tracy-Widom in Remark \ref{second order accuracy},
the following holds,
\begin{align}
\Pr\left(\|  \bN \|_2 \le \sqrt{(F_1^{-1}(1-\delta)\sigma_{M,d}+\mu_{M,d})\sigma^2}\right)\nonumber\\
 =1-\delta + \mathcal{O}(d^{-2/3}).\label{Pr inverse}
\end{align}
 Then, by letting $x_{\delta}= \sqrt{(F_1^{-1}(1-\delta)\sigma_{M,d}+\mu_{M,d})\sigma^2}$, we conclude that when the condition in \eqref{cor condition} holds,  Algorithm \ref{alg_SOMP} successfully recovers the support set $\Omega$ with the probability exceeding $1-\delta + \mathcal{O}(d^{-2/3})$.
 \end{proof}
\end{Corollary}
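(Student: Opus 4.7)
The plan is to combine Theorem \ref{theorem required C} with the Tracy-Widom approximation of Proposition \ref{gaussian l2}, using $F_1^{-1}$ to explicitly produce a value $x_\delta$ that achieves the tail-probability bound \eqref{delta eq}. The bulk of the work has already been done upstream; all that remains is a clean inversion argument.

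First I would invoke the non-asymptotic statement in Remark \ref{second order accuracy}, which gives
\begin{align*}
\text{Pr}(\| \bN\|_2 \le x) = F_1\left(\frac{x^2/\sigma^2 - \mu_{M,d}}{\sigma_{M,d}}\right) + \mathcal{O}(d^{-2/3}).
\end{align*}
I want to find the smallest $x$ such that the right-hand side is at least $1-\delta$ (up to the $\mathcal{O}(d^{-2/3})$ term). Since $F_1$ is a continuous, strictly increasing CDF, its inverse $F_1^{-1}$ is well defined, so I set the argument equal to $F_1^{-1}(1-\delta)$. Solving
\begin{align*}
\frac{x^2/\sigma^2 - \mu_{M,d}}{\sigma_{M,d}} = F_1^{-1}(1-\delta)
\end{align*}
for $x>0$ yields the candidate threshold
\begin{align*}
x_\delta \;=\; \sqrt{\bigl(F_1^{-1}(1-\delta)\sigma_{M,d} + \mu_{M,d}\bigr)\sigma^2},
\end{align*}
and this satisfies $\text{Pr}(\|\bN\|_2 \le x_\delta) = 1 - \delta + \mathcal{O}(d^{-2/3})$, which is exactly the hypothesis \eqref{delta eq} of Theorem \ref{theorem required C} (with the tail weakened by the additive $\mathcal{O}(d^{-2/3})$ term).

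Next I would substitute this $x_\delta$ into \eqref{cor condition old}: the condition $C_{\min} > 2x_\delta / (1-(2L-1)\mu)$ immediately rearranges to the stated inequality \eqref{cor condition}, and dividing through gives the equivalent form \eqref{sigma condition SOMPS} on $\sigma$. Theorem \ref{theorem required C} then guarantees successful recovery with probability at least $1 - \delta + \mathcal{O}(d^{-2/3})$, concluding the argument.

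I don't expect any real obstacle. The only subtle point is tracking the $\mathcal{O}(d^{-2/3})$ correction coming from the Tracy-Widom approximation rather than the exact finite-$d$ CDF of $\|\bN\|_2$; I would make sure this remainder is carried through additively in the probability bound (not multiplicatively), since Theorem \ref{theorem required C} is stated for an exact tail bound. Everything else is a direct substitution.
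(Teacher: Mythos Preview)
Your proposal is correct and follows essentially the same approach as the paper: invoke the Tracy-Widom second-order accuracy from Remark \ref{second order accuracy} to identify $x_\delta = \sqrt{(F_1^{-1}(1-\delta)\sigma_{M,d}+\mu_{M,d})\sigma^2}$, then plug this into Theorem \ref{theorem required C}. Your write-up is slightly more explicit about the inversion step, but the logic is identical.
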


Observing \eqref{cor condition}, we find that it involves the inverse of the CDF of Tracy-Widom distribution and is difficult to gain intuitions on the relationship between $C_{\min}$ and the signal parameters such as $M,d$. The Chernoff-type bound in \eqref{bound exp N} can be alternatively used to gain useful intuitions when $\delta$ is close to zero.

\begin{Remark}\label{remark condition mu}
From \eqref{bound exp N}, when $x \ge (\sqrt{M} + \sqrt{d}+t)\sigma$, the following inequality
$\Pr(\| \bN \|_2 \le x) \ge 1-\exp(-t^2)$ holds.
Thus, from Theorem \ref{theorem required C}, the SRP is at least $ 1-\exp(-t^2)$ when
\begin{align}
C_{\text{min}} > \frac{2(\sqrt{M}+ \sqrt{d} +t )\sigma}{1-(2L-1)\mu}, \label{new require for C}
\end{align}
or equivalently
\begin{align}
\mu < \frac{1-2(\sqrt{M}+\sqrt{d} +t )\sigma/C_{\text{min}}}{2L-1}. \label{condition for mu}
\end{align}
Compared to the sufficient condition $\mu < 1/(2L-1)$ for the noiseless case \cite{MuOMP}, the upper bound of $\mu$ in \eqref{condition for mu} is smaller due to the noisy observations.
The difference between the bound in \eqref{condition for mu} and the condition in \cite{MuOMP} is proportional to the noise standard deviation.
\end{Remark}
\subsubsection{Effects of $M$ and $d$ on Performance Guarantee of SOMPS}

In the following, the effect of the number of measurements and the number of vectors on the condition in \eqref{new require for C} is of interest.
For the effect of $M$, first of all, we consider two extreme cases when $\sigma=0 $ and $\mu=0$  with fixed $C_{\text{min}}$.
The first case is ideal in terms of noise, and the second case is ideal in terms of MIP. We look into these extreme cases to identify if increasing the number of measurements helps or not:
\begin{itemize}
\item
For a measurement matrix $\bPhi$ with $\rank(\bPhi) = M$ and $\sigma=0$, as $M$ tends to $N$ while $N$ is fixed, the measurement matrix $\bPhi$ achieves the minimum mutual coherence satisfying $\mu < \frac{1}{2L-1}$.  In this case, increasing the number of measurements $M$ is advantageous.

\item When the noise level is fixed $\sigma > 0$ and  $\mu=0$, which can be achieved by $M \ge N$, the condition in \eqref{new require for C} becomes $C_{\text{min}} >{2(\sqrt{M}+ \sqrt{d} +t )\sigma}$. It means that $C_{\text{min}}$ should be proportional to the noise level $\sigma$. In particular, if we
    constrain such that $C_{\min}$ and $N$ are constant while increasing $M$, the recovery condition in \eqref{new require for C} with $\mu=0$ becomes infeasible eventually.
\end{itemize}

For general cases when $\mu\neq 0$ and $\sigma>0$, similar to the analysis in Remark \ref{condition of M}, the Welch bound in \eqref{bound mu 1}
can be incorporated into \eqref{new require for C} given that $\bPhi$ is designed to achieve the minimal mutual coherence \cite{str2003Grass,Herman2009High,elad_CS}.
For fixed $N$, we aim to analyze the performance guarantee by approaching $M$ to $N$.
We divide the discussions into two cases when (i) $C_{\text{min}}$ varies with $M$ and (ii) $C_{\text{min}}$ is fixed with $M$:
\begin{itemize}
\item
If  $C_{\text{min}}$ varies with $M$ such that $1- 2\sigma(\sqrt{M}+\sqrt{d}+t)/C_{\text{min}}$ is lower bounded
by some positive constant $\rho$,\footnote{In practice, one can adjust the power of the sparse signal, i.e., $C_{\text{min}}$, to achieve the requirement $1- 2\sigma(\sqrt{M}+\sqrt{d}+t)/C_{\text{min}} > \rho$.} the successful  recovery for SOMPS can be guaranteed with a high probability of $1-\exp(-t^2)$ if the number of measurements satisfies the following,
\begin{align} \label{require M SOMPS}
M > \frac{N}{\frac{{\rho}^2}{(2L-1)^2} (N-1)+1}.
\end{align}
The condition in \eqref{require M SOMPS} reveals the required number of measurements in terms of $N$, the sparsity level $L$, and the noise level involved in the constant $\rho$. It is consistent with the result in \eqref{bounded for M} for the bounded noise case.
\item
If $C_{\text{min}}$ is fixed, we incorporate the Welch bound in \eqref{bound mu 1} into the condition in \eqref{condition for mu}, leading to
\begin{align} \label{Cmin M}
C_{\text{min}} > \underbrace{\frac{2(\sqrt{M}+ \sqrt{d} +t )\sigma}{1-(2L-1)\sqrt{\frac{N-M}{M(N-1)}}}}_{\triangleq h(M)}.
\end{align}
Because both the numerator and denominator are increasing with $M,$ it is difficult to determine if the right-hand-side of \eqref{Cmin M} is increasing with respect to $M$. Taking the first order derivative of  $h(M)$ in \eqref{Cmin M} reveals that $\frac{dh(M)}{d M} < 0$. Thus, $h(M)$ is a decreasing function of $M$. In other words, increasing the number of measurements $M$ is advantageous in this case.
\end{itemize}


Intuitively, a larger number of sparse vectors $d$ can lead to a more accurate recovery performance. In order to evaluate the effect of $d$ on the recovery performance of SOMPS under the Gaussian noise, we have the following remark.

\begin{Remark} \label{condition d}
The value of $C_{\text{min}}$ is also a function of the number of sparse vectors $d$.
Without loss of generality, we assume here  $C_{\text{min}}^2$ is proportional to the number of sparse vectors $d$, i.e., $C_{\text{min}}^2 = dc_{\text{m}}$, where $c_\text{m}$ is a constant. Then, the condition in \eqref{condition for mu} can be rewritten as with respect $d$. Specifically,  the SRP of Algorithm \ref{alg_SOMP} is at least $ 1-\exp(-t^2)$ when
\begin{align}
d> \frac{(\sqrt{M}+t)^2}{\left(\sqrt{c_m}\frac{1-(2L-1)\mu}{2\sigma}-1\right)^2}. \label{approx d bound SOMPS}
\end{align}
If the noise standard deviation $\sigma$ is small or alternatively, $c_m$ is large such that $\sqrt{c_m}\frac{1-(2L-1)\mu}{2\sigma}\gg 1$,
the condition \eqref{approx d bound SOMPS} can be approximated to
\begin{align}
d &>\frac{4(\sqrt{M}+t)^2\sigma^2}{{(1-(2L-1)\mu)^2}{c_m}}. \nonumber
\end{align}
This reveals the fact that the required number of sparse vectors $d$ is proportional to the noise variance $\sigma^2$ in order to guarantee the successful support recovery. Because the mutual coherence $\mu$ depends on the number of measurements $M$, i.e.,  the larger the $M$ value is, the smaller the $\mu$ value will be, the relationship between  $M$ and $d$ is not expressed explicitly in \eqref{approx d bound SOMPS}. Nevertheless, through the simulation in Section \ref{Ssimulation}, one can find more measurements lead to a fewer number of sparse vectors to guarantee the successful recovery.
\end{Remark}

In summary, for a given system, the SRP of SOMPS in Algorithm \ref{alg_SOMP} is provided in \cref{SOMPS guassian SRP}, which serves as a quantitative measure to evaluate the recovery performance of the system.
Moreover, our theoretical findings also provide the
guidelines for adjusting the system parameters to achieve the desired recovery performance.
For example, if it is required that the SRP of SOMPS should be larger than 0.99,
the signal power $C_{\min}$ can be adjusted based on \cref{cor condition} or \cref{new require for C}, the dimension of the measurement matrix $\bPhi$ can be modified  using \cref{Cmin M}, and the number of sparse vectors can be determined through \cref{approx d bound SOMPS}.
By following these guidelines, it becomes possible to finely tune the system parameters and achieve specific performance objectives.
\subsection{Analysis of SOMPT Under Unbounded Noise} \label{sec:unboundSOMPT}
In this subsection, we discuss the performance of SOMPT in Algorithm \ref{alg_SOMP_stop} when $\bN$ is unbounded noise.
The procedure is very similar to that of SOMPS, and we present only the main results while simplifying the proofs and discussions.
First of all, the following theorem shows the lower bound of the SRP.
\begin{Theorem} \label{theorem SOMP random}
 Given the signal model in \eqref{obOMP} and $\mu < 1/(2L-1)$ with $\mu$ being mutual coherence of  $\bPhi$,  if $\bN$ satisfies
$\Pr(\|  \bN \|_2 \le x) =  F_{N}(x) $
and the threshold value is set to
\begin{align}
\tau = \frac{C_{\text{min}}(1-(2L-1)\mu)}{2},  \nonumber
\end{align}
the SRP of Algorithm \ref{alg_SOMP_stop} is lower bounded by
 \begin{align}
   P_s \ge F_{N}\left(\frac{C_{\text{min}}{{(1-(2L-1)\mu )}} } {2 }\right), \label{prob tau random}
  \end{align}
  where $C_{\min} = \min\limits _{i\in \Omega}  \lA [\bC]_{i,:} \rA_2$.
\end{Theorem}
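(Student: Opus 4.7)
The plan is to mirror the two-step strategy used for Theorem \ref{theorem SOMP random L}, combining the random-noise conditioning trick with the bounded-noise SOMPT analysis of Theorem \ref{theorem SOMP stop}. Writing $x_0 = C_{\min}(1-(2L-1)\mu)/2$, which by hypothesis coincides with the prescribed threshold $\tau$, I would condition on the event $\mathcal{X}_x = \{\|\bN\|_2 \le x\}$ for an arbitrary $x<x_0$; on this event the noise is effectively bounded by $x$, and on the complementary event we simply discard the contribution.

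On $\mathcal{X}_x$ with $x<x_0$, the strict inequality $C_{\min}>2\|\bN\|_2/(1-(2L-1)\mu)$ holds. Therefore the same sufficient-condition chain used in Theorem \ref{theorem SOMP} (namely \eqref{Q values} via Lemmas \ref{pre condition 1}--\ref{bound M1}) lets me argue by induction on $l$ that every iteration picks a correct atom from $\Omega$, \emph{provided} the SOMPT loop is still active. What remains is to verify the termination behavior induced by $\tau=x_0$: for $l<L$ and correctly chosen atoms so far, I would reuse the chain in \eqref{R stop 1}--\eqref{R stop 2} with $\epsilon$ replaced by $\|\bN\|_2 \le x$, producing
\begin{align*}
\|\bR^{(l)}\|_2 \;\ge\; \sqrt{1-(L-1)\mu}\, C_{\min}-\|\bN\|_2 \;\ge\; (1-(2L-1)\mu)C_{\min}-x \;=\; 2x_0-x \;>\; x_0 \;=\; \tau,
\end{align*}
so the while-loop continues. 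At $l=L$, the now-standard estimate $\|\bR^{(L)}\|_2 = \|\bP^{(L)}_\perp \bN\|_2 \le \|\bN\|_2 \le x < x_0 = \tau$ forces termination. Hence SOMPT recovers $\Omega$ exactly on $\mathcal{X}_x$, and therefore $P_s \ge \operatorname{Pr}(\mathcal{X}_x) = F_N(x)$ for every $x<x_0$.

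The last step is to let $x \uparrow x_0$ and invoke continuity of $F_N$, exactly as in the equality marked $(a)$ of the proof of Theorem \ref{theorem SOMP random L}, to conclude $P_s \ge F_N(x_0^-) = F_N(x_0) = F_N(C_{\min}(1-(2L-1)\mu)/2)$.

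The main obstacle is mild but real: it is the strict-vs.-non-strict bookkeeping around $\tau = x_0$. Because $x_0$ is exactly the critical value at which condition \eqref{theorem alg1} degenerates to an equality, one cannot directly substitute $\epsilon=x_0$ into Theorem \ref{theorem SOMP stop}; the correct-selection guarantee and the strict termination inequality $\|\bR^{(L)}\|_2<\tau$ both fail on the boundary. The limiting argument over $x<x_0$ cleanly circumvents this, and the passage $F_N(x_0^-) = F_N(x_0)$ is legitimate whenever $F_N$ is continuous at $x_0$—which holds for the Gaussian (Tracy--Widom) setting of Corollary \ref{corrolary SOMP gaussian} and for all absolutely continuous noise distributions of practical interest.
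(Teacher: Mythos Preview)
Your proposal is correct and follows essentially the same approach as the paper: condition on the event $\{\|\bN\|_2\le x\}$, invoke the bounded-noise SOMPT analysis of Theorem~\ref{theorem SOMP stop}, and then pass to the limit $x\uparrow x_0$ using continuity of $F_N$, exactly as in Theorem~\ref{theorem SOMP random L}. Your treatment of the strict-vs.-non-strict issue is in fact more careful than the paper's own proof, which directly substitutes $x=x_0$ and $\tau=x$ into Theorem~\ref{theorem SOMP stop} without explicitly decoupling the fixed threshold $\tau=x_0$ from the varying noise bound $x<x_0$; your re-derivation of the termination inequalities $\|\bR^{(l)}\|_2\ge 2x_0-x>\tau$ for $l<L$ and $\|\bR^{(L)}\|_2\le x<\tau$ is precisely what is needed to make that substitution rigorous.
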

\begin{proof}
By Theorem \ref{theorem SOMP stop},  SOMPT in Algorithm \ref{alg_SOMP_stop} guarantees the support recovery when
$
\lA [\bC]_{i,:} \rA_2 > \frac{2 x }{{1-(2L-1)\mu }}, \forall i \in \Omega,
\| \bN\|_2 \le x$, and $\tau = x$.
Then, following the same procedure as the proof of Theorem \ref{theorem SOMP random L}, the SRP of Algorithm \ref{alg_SOMP_stop} is lower bounded by $F_N(x), \forall x>0$.
Substituting $x={C_{\text{min}}(1-(2L-1)\mu)}/{2}$ leads to \eqref{prob tau random}.
\end{proof}

 \begin{Remark} \label{col SOMP gaussian stop}
When the entries in $\bN$ are i.i.d. Gaussian with $\cN(0,\sigma^2)$ and the threshold value of SOMPT is given by
 \begin{align*}
\tau = \frac{C_{\text{min}}{(1-(2L-1)\mu )}}{2}, \nonumber
\end{align*}
then the SRP of SOMPT is lower bounded by
\begin{align}
   P_s \ge F_1\left(\frac{{(1-(2L-1)\mu)^2 C_{\text{min}}^2}-4\sigma^2 \mu_{M,d}}{4\sigma^2 \sigma_{M,d}}\right) + \mathcal{O}(d^{-2/3}), \label{SOMPT guassian SRP}
  \end{align}
  where the result in \eqref{SOMPT guassian SRP} is a direct consequence of combining \eqref{non-asym guanssian}  with \eqref{prob tau random}.
  \end{Remark}

The following corollary identifies the condition of $C_{\min}$ to guarantee the required SRP of SOMPT in Algorithm \ref{alg_SOMP_stop}.
\begin{Corollary} \label{the required C stop}
Given the signal model in \eqref{obOMP} and $\mu < 1/(2L-1)$ with $\mu$ being mutual coherence of  $\bPhi$, assume that $x_{\delta}$ and $\delta$ satisfy
$\Pr(\|  \bN \|_2 \le x_{\delta})  \ge 1-\delta$. Then, if the threshold value of SOMPT is $\tau = x_{\delta}$ and
\begin{align}
C_{\min} > \frac{2 x_{\delta}}{{1-(2L-1)\mu }} \nonumber
\end{align}
with  $C_{\min} = \min\limits _{i\in \Omega}  \lA [\bC]_{i,:} \rA_2$, the SOMPT in Algorithm \ref{alg_SOMP_stop} can successfully recover the support $\Omega$ with the probability exceeding $1-\delta$.
\end{Corollary}
\begin{proof}
We define the following two events,
\begin{align}
\mathcal{X}_1 &= \left\{\lA [\bC]_{i,:} \rA_2 > \frac{2 x_{\delta} }{{1-(2L-1)\mu }}, \forall i \in \Omega \right\} \nonumber \\
\mathcal{X}_2 &= \left\{ \| \bN\|_2 \le x_{\delta} \right\} \nonumber.
\end{align}
Then, according to Theorem \ref{theorem SOMP stop}, if the threshold value $\tau=x_{\delta}$, the SRP of SOMPT is lower bounded by $\Pr(\mathcal{X}_1\cap \mathcal{X}_2 )\ge 1-\delta$. This concludes the proof.
\end{proof}

When the noise matrix $\bN$ has i.i.d. Gaussian entries,   the following corollary characterizes the performance guarantee of SOMPT under the Gaussian noise.
\begin{Corollary} \label{col required C gaussian stop}
We let the threshold value of SOMPT  be
 \begin{align}
 \tau = \sqrt{(F_1^{-1}(1-\delta)\sigma_{M,d}+\mu_{M,d})\sigma^2}. \label{stop SOMPT}
 \end{align}
Then, if the following holds,
\begin{align}
C_{\min} > \frac{2 \sqrt{(F_1^{-1}(1-\delta)\sigma_{M,d}+\mu_{M,d})\sigma^2}}{{1-(2L-1)\mu }}, \label{cor condition stop}
\end{align}
or similarly
\begin{align}
\sigma < \frac{C_{\min}{(1-(2L-1)\mu )}  }{2 \sqrt{(F_1^{-1}(1-\delta)\sigma_{M,d}+\mu_{M,d})}}, \label{sigma condition SOMPT}
\end{align}
then SOMPT in Algorithm \ref{alg_SOMP_stop} can successfully recover the support set $\Omega$ with probability higher than $1-\delta +  \mathcal{O}(d^{-2/3})$.
\begin{proof}
According to Remark \ref{second order accuracy},
the following holds,
\begin{align*}
\Pr\left(\|  \bN \|_2 \le \sqrt{(F_1^{-1}(1-\delta)\sigma_{M,d}+\mu_{M,d})\sigma^2}\right)\nonumber\\
 =1-\delta + \mathcal{O}(d^{-2/3}).
\end{align*}
 Then, by letting $x_{\delta}= \sqrt{(F_1^{-1}(1-\delta)\sigma_{M,d}+\mu_{M,d})\sigma^2}$ in Theorem \ref{the required C stop}, it is concluded that when the guarantee condition in \eqref{cor condition stop} or \eqref{sigma condition SOMPT} holds,  SOMPT can successfully recover the support set $\Omega$ with probability exceeding $1-\delta +  \mathcal{O}(d^{-2/3})$.
\end{proof}
\end{Corollary}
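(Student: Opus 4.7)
The plan is to view this corollary as a direct specialization of Theorem~\ref{the required C stop} to the case when $\bN$ has i.i.d. Gaussian entries. Theorem~\ref{the required C stop} already gives a sufficient condition for SOMPT to succeed with probability at least $1-\delta$ in terms of a generic deviation bound $\text{Pr}(\|\bN\|_2 \le x_\delta) \ge 1-\delta$, so the task reduces to exhibiting a concrete $x_\delta$ that is valid for the Gaussian ensemble and then plugging it in.

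The key step is to invert the Tracy--Widom approximation from Remark~\ref{second order accuracy}. Specifically, for the Gaussian ensemble we have
\begin{align*}
\text{Pr}(\|\bN\|_2 \le x) \;=\; F_1\!\left(\frac{x^2/\sigma^2 - \mu_{M,d}}{\sigma_{M,d}}\right) + \mathcal{O}(d^{-2/3}).
\end{align*}
I would choose $x$ so that the argument of $F_1$ equals $F_1^{-1}(1-\delta)$, which is well defined because $F_1$ is continuous and strictly increasing on its support. Solving
\begin{align*}
\frac{x^2/\sigma^2 - \mu_{M,d}}{\sigma_{M,d}} \;=\; F_1^{-1}(1-\delta)
\end{align*}
yields precisely $x_\delta = \sqrt{(F_1^{-1}(1-\delta)\sigma_{M,d} + \mu_{M,d})\sigma^2}$, and with this choice we obtain $\text{Pr}(\|\bN\|_2 \le x_\delta) = 1-\delta + \mathcal{O}(d^{-2/3})$.

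With $x_\delta$ in hand, the rest is bookkeeping. I would set the threshold $\tau = x_\delta$ as in \eqref{stop SOMPT} so that SOMPT satisfies the hypothesis of Theorem~\ref{the required C stop}. The condition $C_{\min} > 2 x_\delta / (1-(2L-1)\mu)$ then becomes \eqref{cor condition stop}, and rearranging for $\sigma$ (treating $\mu_{M,d}$ and $\sigma_{M,d}$ as constants depending only on $M,d$) gives the equivalent form \eqref{sigma condition SOMPT}. Applying Theorem~\ref{the required C stop} with the $1-\delta$ replaced by $1-\delta + \mathcal{O}(d^{-2/3})$ delivers the claimed lower bound on the success probability.

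The only mildly delicate point is that the guarantee from Theorem~\ref{the required C stop} is stated for an \emph{exact} tail bound $\text{Pr}(\|\bN\|_2 \le x_\delta) \ge 1-\delta$, whereas here we only have an approximate equality up to an $\mathcal{O}(d^{-2/3})$ correction. This is harmless: the event decomposition $\mathcal{X}_1 \cap \mathcal{X}_2$ used in Theorem~\ref{the required C stop} still yields $\text{Pr}(\mathcal{X}_1\cap \mathcal{X}_2) \ge \text{Pr}(\mathcal{X}_2)$, and the latter is $1-\delta + \mathcal{O}(d^{-2/3})$ by the Tracy--Widom second-order accuracy, so the additive correction is simply absorbed into the probability bound.
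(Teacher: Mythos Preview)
Your proposal is correct and follows essentially the same route as the paper: invoke the Tracy--Widom second-order accuracy from Remark~\ref{second order accuracy} to obtain $\text{Pr}(\|\bN\|_2 \le x_\delta) = 1-\delta + \mathcal{O}(d^{-2/3})$ with $x_\delta = \sqrt{(F_1^{-1}(1-\delta)\sigma_{M,d}+\mu_{M,d})\sigma^2}$, and then plug this $x_\delta$ into Theorem~\ref{the required C stop}. Your additional remark about how the $\mathcal{O}(d^{-2/3})$ correction propagates through the event decomposition is a nice clarification that the paper leaves implicit.
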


\begin{Remark} \label{condition d sompt}
According to Corollary \ref{col required C gaussian stop}, if the threshold value of the SOMPT is given by \eqref{stop SOMPT}, we can obtain the conditions of $C_{\min}$ and
$\sigma$ on which SOMPT can guarantee the required SRP of the support set.
Moreover, the conditions of $M$ and $d$ for the performance guarantee can also be calculated in a similar way as in Remark \ref{remark condition mu} and Remark \ref{condition d},
\begin{align}
M > \frac{N}{\frac{{\rho}^2}{(2L-1)^2} (N-1)+1}. \label{condition for M SOMPT}
\end{align}
and
\begin{align}
d &>\frac{(\sqrt{M}+t)^2}{\left(\sqrt{c_m}\frac{1-(2L-1)\mu}{2\sigma}-1\right)^2} . \label{approx d bound SOMPT}
\end{align}
\end{Remark}

Similar to SOMPS in \cref{sec:SOMPS}, for a given system, SOMPT in Algorithm \ref{alg_SOMP_stop} also has an SRP  as provided in \cref{SOMPT guassian SRP}. Guidelines for adjusting parameters such as signal power, number of measurements, and number of sparse vectors are given in \cref{cor condition stop}, \cref{condition for M SOMPT}, and \cref{approx d bound SOMPT} respectively.

\section{Simulation Results} \label{Ssimulation}

{
In this section, we conduct numerical simulations to validate the main results of SOMPS and SOMPT in Section \ref{section bound noise} and Section \ref{section unbounded noise}. In the bounded noise case, the performance of the SOMPS and SOMPT is determined by the parameters $\mu$, $C_{\min}$, $L$, and $\epsilon$.
In the unbounded noise case, this is determined by $\mu$, $C_{\min}$, $L$, $d$, and $\sigma$. For each set of simulation parameters, we perform $10,000$-trials Monte-Carlo simulation. In the trials, the measurement matrix $\bPhi$ is deterministic, which is generated by following the methodology described in \cite{elad_CS}. The approach in \cite{elad_CS} minimizes the average measure of the mutual coherence of the measurement matrix, which has been shown to provide better sparse reconstruction performance. In each trial, the support of the sparse signal $\bC$ and the noise $\bN$ are generated randomly and independently. After randomly generating the support set,
we assign the
non-zero components in $\bC$ to be $z_{i,j}\sqrt{C_{\min}^2/d}, \forall i\in\Omega, j\le d$, where  the random variable $z_{i,j}$ is either $-1$ or $1$ with
probability $0.5$.}

\subsection{SRP Validation } \label{section simulation SRP}
\subsubsection{Bounded Noise}

\begin{figure}[!t]
\centering
\includegraphics[width=0.6 \linewidth]{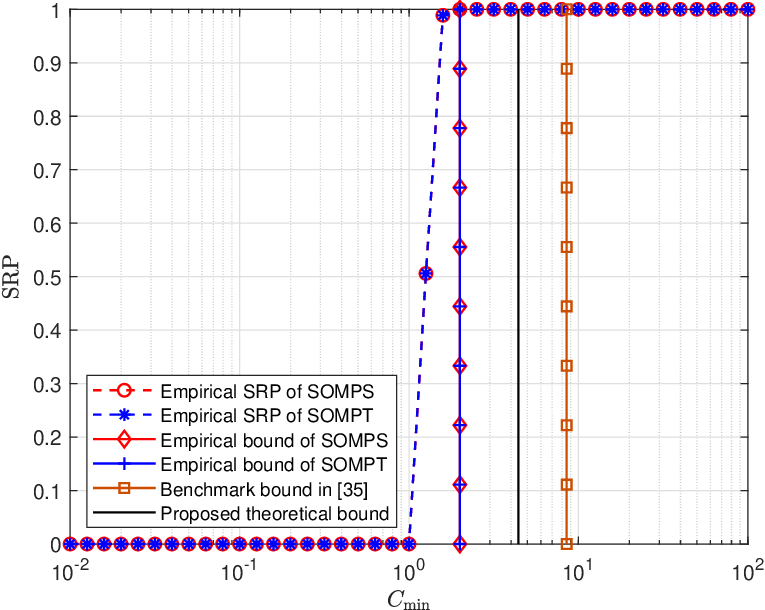}
\caption{SRP of SOMP versus $C_{\min}$ under bounded noise ($M=100, N=200,L=4,d=4, \|\bN \|_2 \le 1$).}  \label{figure_Cmin_bounded_guarantee}
\end{figure}

In Fig. \ref{figure_Cmin_bounded_guarantee}, we evaluate the SRP performances of SOMPS and SOMPT under the bounded noise based on empirical simulations as well as the theory established in Section IV. In Fig. \ref{figure_Cmin_bounded_guarantee}, we also illustrate the theoretical bound of $C_{\min}$ provided in \cite{Wang2013Performance} for a comparison.
For the numerical simulations, the simulation parameters are set to $M=100, N=200,L=4$, and $d=4$.
From Theorem \ref{theorem SOMP} and Theorem \ref{theorem SOMP stop}, when the $C_{\min} > \frac{2 \epsilon}{{1-(2L-1)\mu }}$, SOMPS and SOMPT can successfully recover the support set.
Thus, the "Proposed theoretical bound" in Fig. \ref{figure_Cmin_bounded_guarantee} is plotted at $C_\text{min}$ values through  the
formula $C_{\min} =\frac{2 \epsilon}{{1-(2L-1)\mu }}$ according to \cref{theorem bounded noise alg1}.
The curves of the $``\text{Empirical SRP}"$ are obtained by averaging the trials and depict the empirical SRP for different values of $C_{\min}$. The curves of the $``\text{Empirical bound}"$ are drawn at $C_\text{min}$ values where the successful recovery for all trials is achieved.
For a comparison, we also plot the theoretical bound for $C_{\min}$ in \cite{Wang2013Performance}  as a benchmark, denoted as "Benchmark bound in \cite{Wang2013Performance}".
As seen from Fig. \ref{figure_Cmin_bounded_guarantee},  when the sufficient conditions in Theorem \ref{theorem SOMP} and Theorem \ref{theorem SOMP stop} are satisfied, SOMPS and SOMPT both guarantee the successful support recovery.
This verifies that the simulation results are consistent with the derived theoretical bounds.
Upon examining the results in \cref{figure_Cmin_bounded_guarantee}, it is evident that the proposed theoretical bound for $C_{\min}$ to ensure successful recovery is tighter than the one presented in \cite{Wang2013Performance}. This improvement in tightness is attributed to the $\ell_2$-norm of noise involved in our analysis.
Moreover, it is interesting to find that SOMPS and SOMPT can achieve similar SRP when the noise is bounded. 
This is because, in the case of bounded noise, SOMPT's stopping criterion $\| \bR^{(l)}\|_2 < \epsilon$ aligns well with the condition that the noise is bounded, i.e., $\|\bN\|_2 \le \epsilon$. As a result, the stopping criterion of $\| \bR^{(l)}\|_2 < \tau$ for SOMPT allows it to correctly terminate the iteration process.

The gap between the theoretical bounds and the empirical bounds in Fig. \ref{figure_Cmin_bounded_guarantee} can be interpreted as follows. We let $\cC_0$ and ${\cC}_1$ be, respectively, the set of all sparse signals $\bC$ and the set of $10,000$ realizations of $\bC$ (thus, ${\cC}_1 \subset \cC_0$) that are used to evaluate the Empirical SRP and Empirical bounds. It is worth noting that the theoretical bounds in Fig. \ref{figure_Cmin_bounded_guarantee} are valid for all $\bC \in \cC_0$, while the empirical bounds are plotted only for $\bC \in \cC_1$. Depending on the realization of sparse signals and noise, some sparse signals are difficult to be recovered, which require higher $C_{\text{min}}$ or lower noise level $\epsilon$ to guarantee the successful recovery. Therefore, it is possible that there exists any $\bC\in \cC_0 \cap \cC_1^c$ such that the empirical bounds of such $\bC$ are close to the theoretical bounds in Fig. \ref{figure_Cmin_bounded_guarantee}. This accounts for the reason that there is a gap between the theoretical bound and the empirical bounds. The same reasons also account for theoretical gap observed in the simulations of  the following subsections.

\begin{figure}[tb]
\centering
\includegraphics[width=0.6 \linewidth]{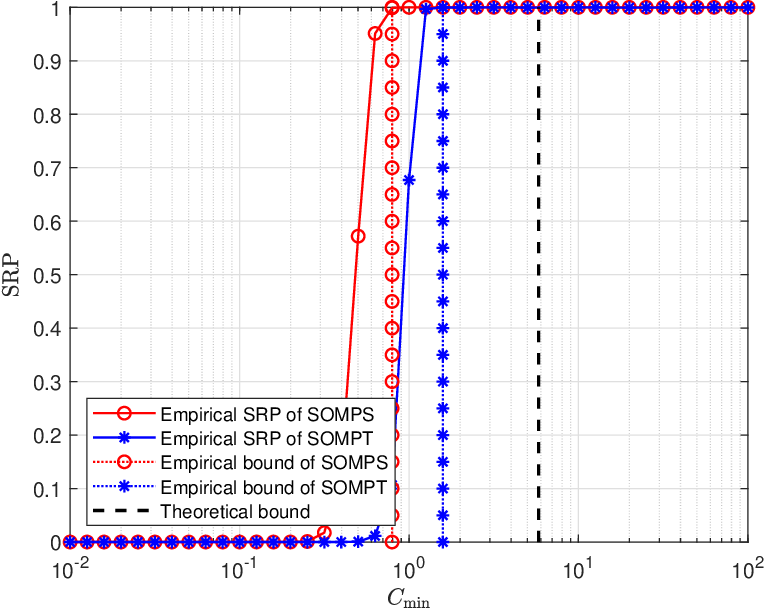}
\caption{SRP of SOMP versus $C_{\min}$ under Gaussian noise ($M=100, N=200,L=4,d=4, [\bN]_{i,j} \sim \cN(0,1/M) ,\delta=10^{-3}$).} \label{figure_Cmin_Guassian_guarantee}
\end{figure}

\subsubsection{Gaussian Noise}

In Fig. \ref{figure_Cmin_Guassian_guarantee}, we validate our analysis when the noise is random Gaussian.
The simulation parameters are set to $M=100, N=200, L=4,d=4$, and $\delta=10^{-3}$.
In Fig. \ref{figure_Cmin_Guassian_guarantee}, we illustrate the simulated SRPs of SOMPS and SOMPT with different $C_{\min}$, and compare them with the proposed theoretical bounds.
Similar to \cref{figure_Cmin_bounded_guarantee}, the curves of the ``Empirical SRP'' in \cref{figure_Cmin_Guassian_guarantee} are obtained by averaging the trials, the curve of "Theoretical bound" is plotted at the value of $C_{\min}$ which satisfies the condition in \eqref{cor condition} and \eqref{cor condition stop}, and the curves of the ``Empirical bound'' are drawn at $C_{\min}$ values at which
the recovery probability of the trials is at least $1-\delta$. Therefore, the curves for the empirical and theoretical bounds are perpendicular to the axis.
 As we can see from \cref{figure_Cmin_Guassian_guarantee}, when the $C_{\min}$ satisfies the condition in \eqref{cor condition} and \eqref{cor condition stop},
the recovery probability is at least $1-\delta$, which is consistent with the empirical bounds in Fig. \ref{figure_Cmin_Guassian_guarantee}.
 Unlike the scenario of bounded noise, where SOMPS and SOMPT exhibit the similar performance, the SOMPT in \cref{figure_Cmin_Guassian_guarantee} does not yield same performance as SOMPS in the scenario of unbounded noise. 
 This is because  a priori knowledge of the sparsity level helps to improve the recovery performance of SOMPS, and the selection of an appropriate threshold value for SOMPT is more complex in the scenario of unbounded noise. Considering the empirical performance of the SOMPT depends on the selection of the threshold value, the optimization of this threshold is an interesting future work.

Compared to the bounded noise scenario in Fig. \ref{figure_Cmin_bounded_guarantee}, the theoretical bounds in Fig. \ref{figure_Cmin_Guassian_guarantee} seem looser. This is because, apart from the difference between $\cC_0$ and $\cC_1$,  the empirical bounds in Fig. \ref{figure_Cmin_Guassian_guarantee} are plotted by averaging the simulated sparse signals in $\cC_1$. Therefore, in addition to the case when $\bC\in \cC_0 \cap \cC_1^c$, it is also possible that there exists some sparse signal $\bC\in \cC_1$ that shows poorer empirical results, i.e., requiring higher $C_{\text{min}}$ or lower $\sigma$ than the average result in Fig. \ref{figure_Cmin_bounded_guarantee}. It is for this reason that the derived bound is not tight enough. The same observation was also made previously in a separate study \cite{Ben2010Coherence} that analyzed the recovery performance of OMP.

\begin{figure}[!t]
\centering
\includegraphics[width=0.6 \linewidth]{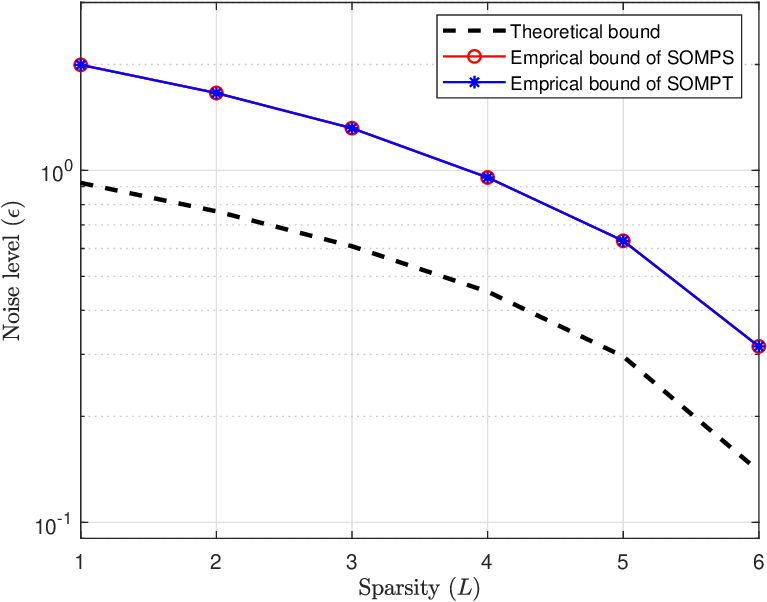}
\caption{Noise level for guarantee of SOMP  versus sparsity $L$ under bounded noise ($M=100, N=200,  d=4,C_{\min}=2,\|\bN \|_2 \le \epsilon $).} \label{figure_L_Bounded}
\end{figure}

\begin{figure}[!t]
\centering
\includegraphics[width=0.6 \linewidth]{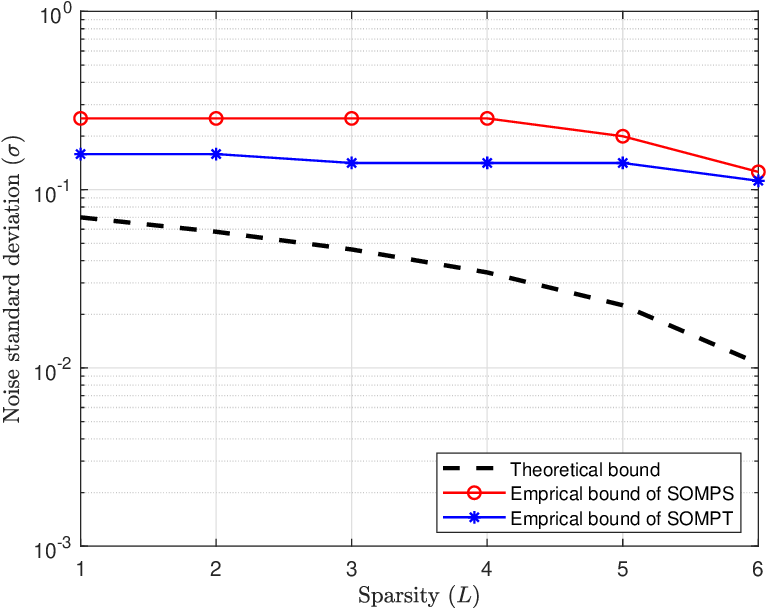 }
\caption{Noise level for guarantee of SOMP versus sparsity $L$ under Gaussian noise ($M=100, N=200, d=4,C_{\min}=2, [\bN]_{i,j} \sim \cN(0,\sigma^2) ,\delta=10^{-3}$).} \label{figure_L_1_Gaussian}
\end{figure}

\subsection{Effects of Various Factors on SRP Guarantees}
\subsubsection{The Sparsity  Level $L$} \label{section VIB1}

{
In Fig. \ref{figure_L_Bounded}, we illustrate our analysis for bounded noise under different levels of sparsity.
Based on the methodology in \cite{elad_CS},
the generated measurement matrix has the mutual coherence $\mu=0.0782$, and then, the maximal sparsity level to satisfy $\mu<1/(2L-1)$ is given by $L=6$.
When $ L > 6$, the proposed analysis cannot provide a valid bound.
 Thus, we let $L\le 6$ in Fig. \ref{figure_L_Bounded}.
The theoretical bound in Fig. \ref{figure_L_Bounded} is the maximum noise level across different signal sparsity levels computed according to Theorem \ref{theorem SOMP} and Theorem \ref{theorem SOMP stop}.
The empirical curves are the simulated noise levels that guarantee the exact support recovery at each sparsity level.
As we can see, with the increasing sparsity level, the empirical curves should decrease to guarantee the support recovery for SOMPS and SOMPT,
which corresponds to the theoretical bounds in Theorem \ref{theorem SOMP} and Theorem \ref{theorem SOMP stop}.}

In Fig. \ref{figure_L_1_Gaussian}, we illustrate our analysis for Gaussian noise under different levels of sparsity.
The simulation parameters are $M=100, N=200,d=4$, $\delta=10^{-3}$, and $L=1,2,\ldots,6$.
{According to \eqref{sigma condition SOMPS} and \eqref{sigma condition SOMPT}, the noise level $\sigma$ that guarantees the recovery performance of SOMPS and SOMPT decreases as sparsity $L$ increases, which is indicated by the dashed curve of the theoretical bounds.} {Note in Fig. \ref{figure_L_1_Gaussian} that when the sparsity level is large, the gap between the theoretical bound and the empirical curve increases. This
 is because if the $L$ is large, the denominators $1-(2L-1)\mu $ in \eqref{cor condition} and \eqref{cor condition stop} will be close to zero, then the bound becomes loose.}

\subsubsection{The Number of Sparse Vectors $d$}

In Figs. \ref{figure_d_1}-\ref{figure_d_2}, we show the SRPs of SOMPS and SOMPT under Gaussian noise for different values of $d$, where $C_{\text{min}}^2$ is proportional to the number of sparse vectors $d$, i.e., $C_{\text{min}}^2 =dc_m$ with fixed $c_m=1$.
For the simulation, we use the colormap to define the SRP in the range of $[0,1]$, where the case $\text{SRP}=0$ is in black and the case $\text{SRP}=1$ is in white. The theoretical bounds for the noise level $\sigma$ are based on expressions in \eqref{sigma condition SOMPS} for SOMPS and \eqref{sigma condition SOMPT} for SOMPT.
As can be seen from Figs. \ref{figure_d_1}-\ref{figure_d_2}, as the noise level increases, the required number of sparse vectors for both SOMPS and SOMPT increases accordingly, which is consistent with the derived theoretical bound.

\begin{figure}[!t]
\centering
\includegraphics[width=0.6 \linewidth]{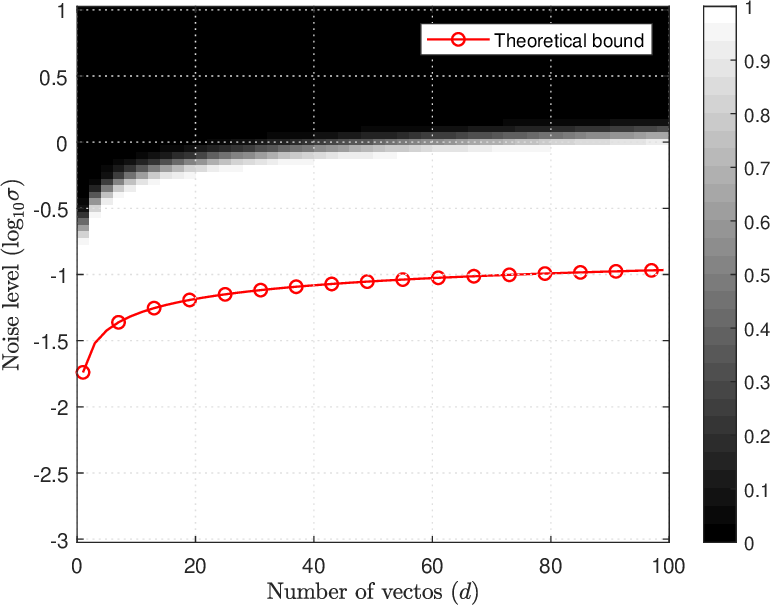}
\caption{SRP of SOMPS (Algorithm 1) with different $d$ ($M=100, N=200,L=4,C_{\text{min}} =\sqrt{d}, [\bN]_{i,j} \sim \cN(0,\sigma^2),\delta=10^{-3} $).} \label{figure_d_1}
\end{figure}

\begin{figure}[!t]
\centering
\includegraphics[width=0.6 \linewidth]{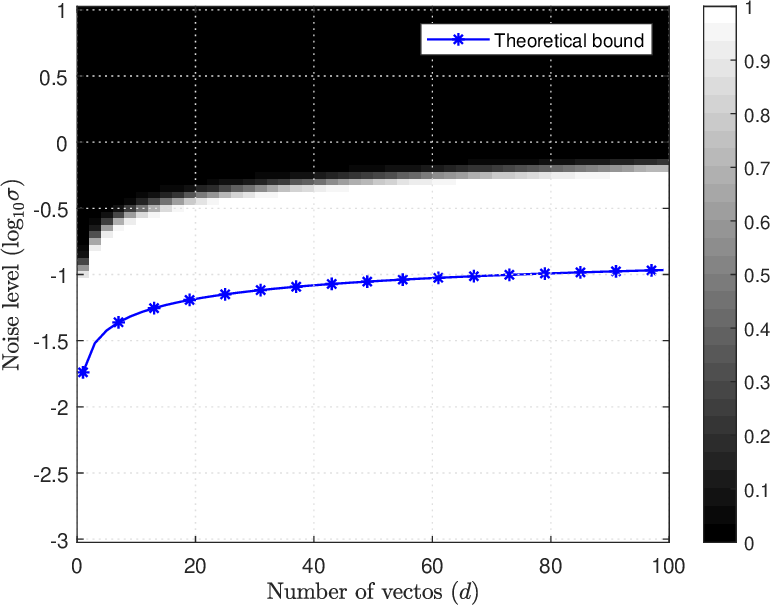}
\caption{SRP of SOMPT (Algorithm 2) with different $d$ ($M=100, N=200,L=4,  C_{\text{min}} =\sqrt{d}, [\bN]_{i,j} \sim \cN(0,\sigma^2),\delta=10^{-3} $).} \label{figure_d_2}
\end{figure}

\subsubsection{The Number of Measurements $M$}\label{section simulation M}


In  Figs. \ref{figure_Gaussian_M_3}-\ref{figure_Gaussian_M_4}, we illustrate our analysis by evaluating the number of measurements $(M)$ across the sparsity level $(L)$ under Gaussian noise. The simulation parameters are $N=200, d=4, \sigma=0.01$, and $\delta=10^{-3}$. The theoretical bounds in Figs. \ref{figure_Gaussian_M_3}-\ref{figure_Gaussian_M_4} for each sparsity level are obtained by increasing the number of measurements $M$ until the conditions \eqref{cor condition} and \eqref{cor condition stop} are satisfied. Observing the simulations in Figs. \ref{figure_Gaussian_M_3}-\ref{figure_Gaussian_M_4}, as the sparsity level $L$ increases, one can find the required number of measurements to guarantee the successful recovery also increases, which aligns with our derived theoretical bounds. Interestingly, one can find in Figs. \ref{figure_Gaussian_M_3}-\ref{figure_Gaussian_M_4} that the theoretical bound is tight when the sparsity level is small, which verifies our discussions in Remark \ref{condition of M}. Moreover, as the values of $M$ and $L$ increase, the difference between $\cC_0$ and ${\cC}_1$ increases, therefore we can observe that the theoretical bounds become loose in Figs.~\ref{figure_Gaussian_M_3}-\ref{figure_Gaussian_M_4}.

\begin{figure} [!t]
\centering
\includegraphics[width=0.6 \linewidth]{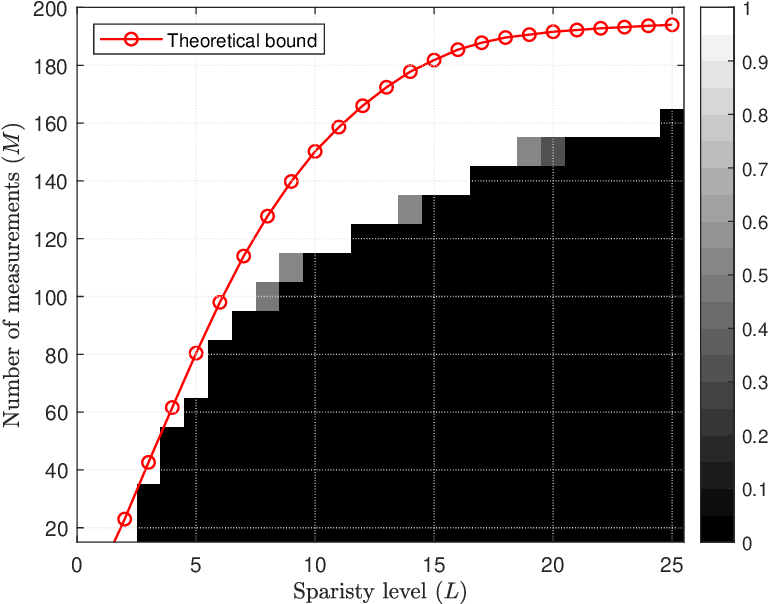}
\caption{SRP of SOMPS under Gaussian noise with different $M$ ($N=200,C_{\text{min}} =2, d=4,  [\bN]_{i,j} \sim \cN(0,\sigma^2),\sigma=0.1 ,\delta=10^{-3}$).} \label{figure_Gaussian_M_3}
\end{figure}

\begin{figure}[!t]
\centering
\includegraphics[width=0.6 \linewidth]{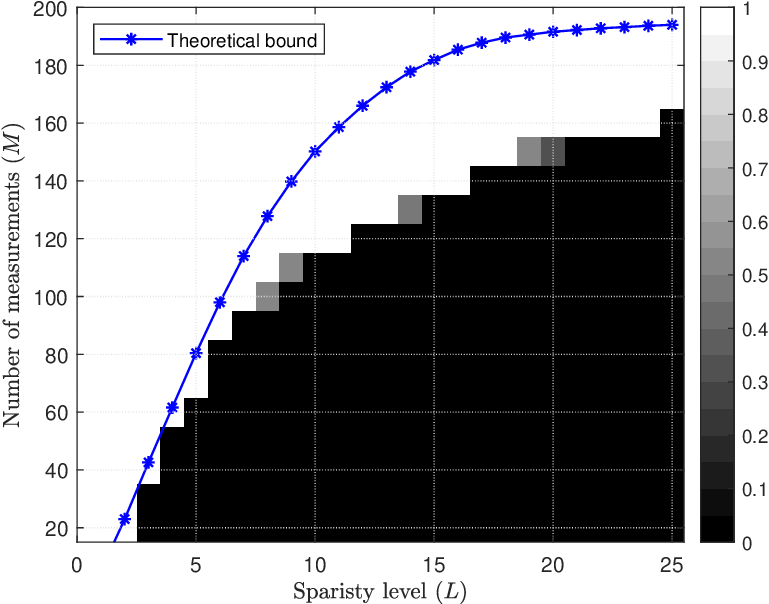}
\caption{SRP of SOMPT under Gaussian noise with different $M$ ($N=200,C_{\text{min}}=2, d=4,  [\bN]_{i,j} \sim \cN(0,\sigma^2),\sigma=0.1 ,\delta=10^{-3}$).} \label{figure_Gaussian_M_4}
\end{figure}

In  Fig. \ref{figure_Gaussian_M_D}, we illustrate our analysis of SOMP  under Gaussian noise by evaluating the number of measurements $M$ with the number of sparse vectors $d$, where $C_{\text{min}}^2$ is proportional to $d$, i.e., $C_{\text{min}}^2 =dc_m$ with fixed $c_m=2$. The remaining simulation parameters are $N=200, L=4, [\bN]_{i,j} \sim \cN(0,\sigma^2), \sigma=0.02$, and $\delta=10^{-3}$. In Fig. \ref{figure_Gaussian_M_D}, the theoretical bound of the number of measurements $M$ for each $d$ is the minimal $M$ which satisfies the conditions in \eqref{cor condition} for SOMPS and \eqref{cor condition stop} for SOMPT. One can find from the theoretical and empirical bounds in \cref{figure_Gaussian_M_D} that with the increasing of the number of vectors, the required number of measurements to guarantee the successful recovery decreases. Meanwhile, the number of required measurements stabilizes when the number of sparse vectors is sufficiently large. 
Observing \cref{figure_Gaussian_M_D}, we can also note that when $d$ is small, there exists a large gap between the theoretical curve and the empirical bound. This discrepancy arises because a smaller value of $d$ corresponds to a lower signal power $C_{\min}=\sqrt{2d}$. Consequently, a significantly larger number of measurements is required to ensure successful recovery, leading to a relatively loose theoretical bound.
Nevertheless, the results in \cref{figure_Gaussian_M_D} align with the approximate analysis presented in \cref{condition d}, which explores the relationship between $M$ and $d$.


\begin{figure} [!t]
\centering
\includegraphics[width=0.6\linewidth]{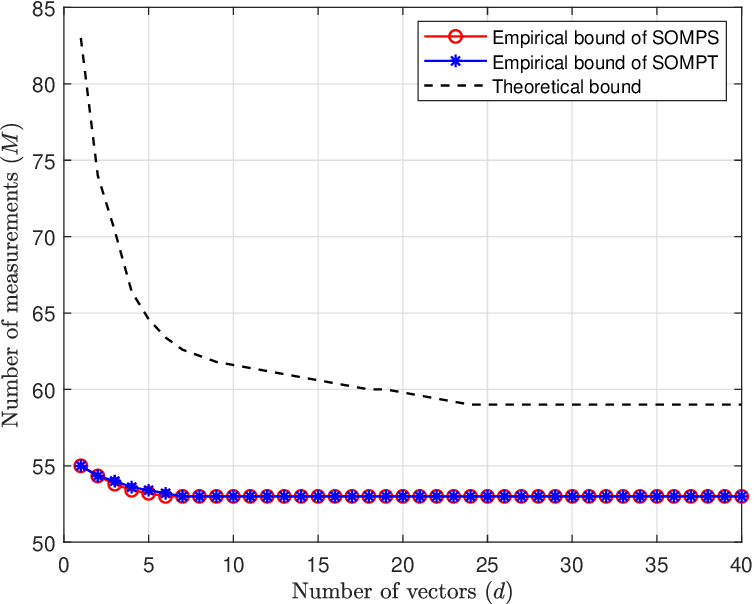}
\caption{Performance guarantee of SOMP for different $M$ and $d$ under Gaussian noise ($N=200,C_{\text{min}} =\sqrt{2d}, L=4, [\bN]_{i,j} \sim \cN(0,\sigma^2),\sigma=0.02 ,\delta=10^{-3} $).} \label{figure_Gaussian_M_D}
\end{figure}

\subsection{The Effect of Dynamic Range on SRP } \label{section DR}
For the previous simulation setting,
the values of $\| [\bC]_{i,:}\|_2$ for different $i$ are the same.  In Fig. \ref{figure_DR_Gaussian}, we evaluate the effect of dynamic
range of the entries of sparse matrix $\bC$  on the recovery performance of SOMP. The simulation parameters are $M=100, N=200, L=4,d=4$, and $\delta=10^{-3}$. We recall that the dynamic range for the case of OMP, which is defined by the minimal and maximal absolute value of the non-zero elements of the sparse vector, affects the recovery performance of OMP \cite{Mi2017Prob,Lee2020Error}. Accordingly, the dynamic range of the sparse signal for MMV is defined as $C_{\min} = \min_{i\in \Omega} \|[\bC]_{i,:}\|_2$ and $C_{\max} = \max_{i\in \Omega} \|[\bC]_{i,:}\|_2$. For the simulation setting, we first generate $L$ random variables $C_i , \forall i \in \Omega$ from a uniform distribution on the interval $[C_{\min},C_{\max}]$, and then the non-zero elements of $[\bC]_{i,j}, \forall i \in \Omega, j\le d$ are drawn from a white Gaussian distribution with proper normalization to satisfy $\|[\bC]_{i,:}\|_2=C_i, \forall i \in \Omega$. In Fig. \ref{figure_DR_Gaussian}, the empirical curves illustrate the noise levels which guarantee the recovery of SOMP at each dynamic range, i.e., ${C_{\min}}/{C_{\max}}$. The theoretical bound in Fig. \ref{figure_DR_Gaussian} is plotted according to \eqref{sigma condition SOMPS} and \eqref{sigma condition SOMPT}. From Fig. \ref{figure_DR_Gaussian}, as the ${C_{\min}}/{C_{\max}}$ decreases, the empirical recovery performance of SOMP deteriorates under both bounded and Gaussian noises, which is consistent with the sensitivity of OMP \cite{Mi2017Prob,Lee2020Error}. Meanwhile, one can find that the derived theoretical bounds still reveal the trend that follows the trends of the empirical curves.

\begin{figure}[!t]
\centering
\includegraphics[width=0.6\linewidth]{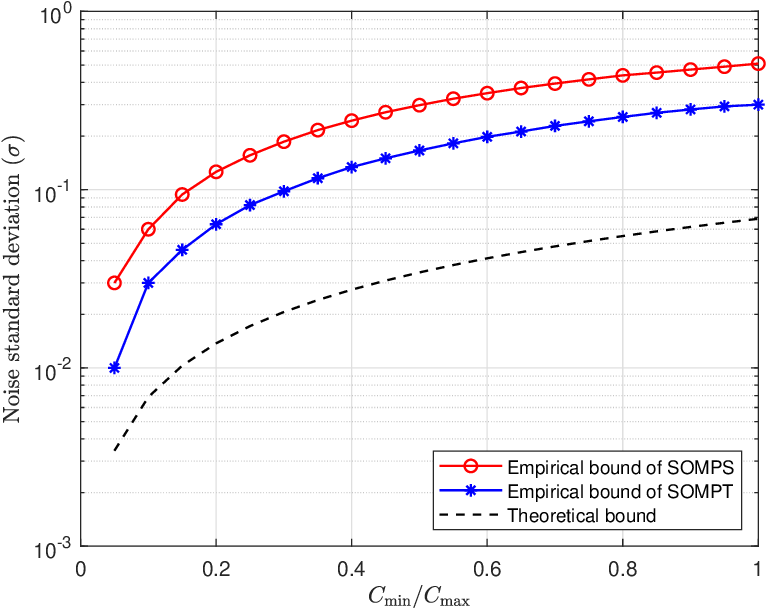}
\caption{Noise level for guarantee of SOMP  versus $C_{\min}/C_{\max}$ under Gaussian noise ($M=100, N=200,  L=4, d=4,C_{\max}=4,[\bN ]_{i,j} \sim \cN(0,\sigma^2),\delta =10^{-3}$).} \label{figure_DR_Gaussian}
\end{figure}

\section{Conclusion} \label{SConclusion}
In this paper, we have analyzed the performance guarantee of SOMP based on the MIP of measurement matrices when the observations are corrupted by noise.
The MIP reveals its amenability and has been widely exploited in various signal processing problems, thoroughly understanding the noisy SOMP in terms of the MIP is emerging.
Specifically, when the noise is bounded, we have shown that if the $\ell_2$-norm of non-zero rows of the row-sparse matrix is lower bounded by ${2 \| \bN \|_2}/({1-(2L-1)\mu })$, the successful recovery of the support set is guaranteed.
On the other hand, when the noise is unbounded, the closed-form lower bound of the SRP was derived.
Based on the derived lower bound, we have shown the conditions for the number of measurements, noise level, the number of sparse vectors, and mutual coherence, on which the required  SRP can be achieved.
Finally, the simulation results validated our analysis, where the proposed bound under $\ell_2$-norm of the noise is tighter than the existing bound \cite{Wang2013Performance}.

\bibliographystyle{IEEEtran}

\bibliography{IEEEabrv,Conference_mmWave_CS}

\clearpage

\end{document}